\documentclass[final]{siamart251216}

\usepackage{amsmath}
\usepackage{stmaryrd}
\usepackage[utf8]{inputenc}
\usepackage[english]{babel}
\usepackage{accents}
\usepackage[T1]{fontenc}
\usepackage{graphicx}
\usepackage{wrapfig}
\usepackage{accents}
\usepackage{fancyhdr}
\usepackage{amsfonts}
\usepackage{amssymb}
\usepackage{dsfont}
\usepackage{caption}
\usepackage{subcaption}
\usepackage[justification=centering]{caption}
\usepackage{multirow}
\usepackage{scalerel}
\usepackage{datetime}
\usepackage{wasysym}
\usepackage{array}
\usepackage{listings}
\usepackage{csquotes}
\usepackage{biblatex}
\usepackage{algorithm}
\usepackage{booktabs}
\usepackage{algpseudocode}
\usepackage{enumerate}
\usepackage{hyperref}
\usepackage{cleveref}
\usepackage{amsopn}
\usepackage{xurl}

\bibliography{biblio}

\ifpdf
  \DeclareGraphicsExtensions{.eps,.pdf,.png,.jpg}
\else
  \DeclareGraphicsExtensions{.eps}
\fi

\newsiamremark{remark}{Remark}
\newsiamremark{hypothesis}{Hypothesis}
\crefname{hypothesis}{Hypothesis}{Hypotheses}
\newsiamthm{claim}{Claim}
\newsiamremark{fact}{Fact}
\crefname{fact}{Fact}{Facts}

\headers{The M-Tensor Formalism}{R. Cloarec, S. Rodriguez, X. Kestelyn, F. Chinesta}
\author{Remi Cloarec\thanks{Arts et Métiers Institute of Technology, 151 Bd de l'Hôpital, Paris, France}}

\title{M-Tensor Formalism: A Non-iterative High Dimensional Least Squares Regression for Nonlinear Models with Scarce Data \thanks{Submitted to the editors DATE.
\funding{This work was funded by the ENSAM RTE Chair.}}}

\author{Rémi Cloarec$^\ddagger$\thanks{ENSAM RTE Chair, Laboratoire PIMM UMR 8006, Arts et Métiers Institute of Technology, CNRS, Cnam, F-75013 Paris, France
  (\email{remi.cloarec@ensam.eu}, \email{sebastian.rodriguez\_iturra@ensam.eu}, \email{francisco.chinesta@ensam.eu}).}
\and Sebastian Rodriguez\footnotemark[2]
\and Xavier Kestelyn\thanks{ENSAM RTE Chair, Univ. Lille, Arts et Métiers Institute of Technology, Centrale Lille, Junia, ULR 2697-L2EP, F-59000 Lille, France
  (\email{xavier.kestelyn@ensam.eu}).}
\and Francisco Chinesta\footnotemark[2]}

\numberwithin{equation}{section}

\hypersetup{
    colorlinks=true,
    linkcolor=black,
    filecolor=magenta,      
    urlcolor=cyan,
    citecolor =cyan,
    anchorcolor = green,
}

\newdateformat{mydateformat}{\THEDAY-\THEMONTH-\THEYEAR}

\DeclareMathOperator*{\Lotimes}{%
  \mathop{\mathchoice{\ \,\text{\raisebox{-.5mm}{\Large{|}}}\hspace{-1.6mm}\bigotimes\ \ }{|\hspace{-1.5mm}\bigotimes\ \ }{|\hspace{-1.5mm}\bigotimes\ \ }{|\hspace{-1.5mm}\bigotimes\ \ }}%
}

\DeclareMathOperator{\mP}{\mathcal{P}}
\DeclareMathOperator{\mO}{\mathcal{O}}
\DeclareMathOperator{\bx}{\mathbf{x}}
\DeclareMathOperator{\bC}{\mathbf{C}}

\DeclareMathOperator{\bpsi}{\pmb{\psi}}
\DeclareMathOperator{\bphi}{\pmb{\varphi}}
\DeclareMathOperator{\bPhi}{\mathbf{\Phi}}

\DeclareMathOperator{\dbx}{\delta\mathbf{x}}
\DeclareMathOperator{\dbC}{\delta\mathbf{C}}
\DeclareMathOperator{\dbphi}{\delta\pmb{\varphi}}
\DeclareMathOperator{\R}{\mathbb{R}}

\DeclareMathOperator*{\Id}{\mathbb{I}\text{d}}

\begin{document}

\maketitle

\begin{abstract}
We present a multilinear regression framework based on tensor algebra tailored to high-dimensional contexts where data is scarce. We exploit algebraic properties of a partial tensor product, namely the m-tensor product, to leverage structured equations with separated variables. The proposed method combines kernel properties along with tensor algebra to prevent explicit construction of the exponentially large feature space and tackle approximations up to hundreds of parameters while avoiding the fixed-point strategy. This is achieved by only ever employing the regression operator in a factorized form. We present this formalism along with different regularization techniques suited for low amount of data with a high number of parameters while preserving well-known matrix-based properties. We demonstrate complexity scaling on a general benchmark to show robustness for engineering problems and ease of implementation. 
\end{abstract}

\begin{keywords}
Multilinear algebra, Tensor format, High-dimensional regression, Kernel-based regression
\end{keywords}

\begin{MSCcodes}
15A23, 15A69, 65D40
\end{MSCcodes}

\section{Introduction}

    Engineering problems are often made dependent on parameters. In certain cases, the number of those parameters might be reduced by identifying fewer latent variables as is done by model order reduction techniques. In certain scenarios the dependence has to be explicit. In that case even for problems with low latent dimensionality, making explicit use of those parameters becomes intractable. This issue is often found in design optimization problems for instance. Manipulating a large number of parameters often becomes numerically intractable because of a wall of complexity coined \emph{curse of dimensionality} \cite{bellman}. This curse translates the exponential growth of models (\emph{i.e.} the number of coefficients) and that of the required amount of data with respect to the number of parameters. In contexts where data is scarce and costly, that curse makes models simply inaccessible.

    Solutions proposed to circumvent this curse can rely on a modification of the grid required by the model, \emph{e.g.} with a sparse grid \cite{smolyak2013}, or with an \emph{adaptive} selection of the pertinent points of that grid to retain \cite{SSL}. A reduction of complexity can also be found by leveraging the structure of the equations. A common approach is to formulate the problem with separated variables. A prime example of this approach can be found in the works of Beylkin and Mohlenkamp \cite{beylkin}. The separated variables representation inherently relies on a tensorial factorization of the operators, often based on the use of Kronecker product-based linear systems \cite{greedy-rank-one,LOAN200085}. This reliance on tensor representation can explain the recent development of plenty of methods based on tensor network formalisms, more specifically the Tensor-Train (TT-) format \cite{TTformat} and the overall gain of interest toward tensor methods \cite{kolda,koldabook,hackbush}.

    A popular variable separation technique stemming from computational mechanics is the Proper Generalized Decomposition (PGD) \cite{PGD} and its sparse data-based version, the s-PGD \cite{ibanez}. Those approaches rely on a Canonical Polyadic Decomposition (CPD) of the field that is learnt from one-dimensional modes, \emph{i.e.} a rank-1 tensor is added at each loop on all axes. It resembles the CP-ALS method \cite{kolda} and solves for each mode employing a fixed-point strategy. This fixed-point approach implies a sequential construction and may append modes late in the algorithm that serve only as correctives for early ones that were not optimal.

    Many recent approaches are based on the TT-format and its good scaling with high-dimensional problems. A framework that inspired this paper is that of MANDy \cite{MANDy}. In this work, Gelß et al. adapt the popular SINDy framework \cite{databook} to high-dimensional dynamical system identification, which remains a limitation for SINDy. To achieve good scaling, the authors built sparse TT cores from a choice of one-dimensional basis functions. The same authors later expanded the same methodology to image classification \cite{gelss-klus-image-kernel} where they started mixing their tensorial framework with kernel interpretations to tackle efficiently higher dimensional problems without relying on the TT-SVD sequential algorithm which they demonstrated became a computational bottleneck. At the same time as Baddoo et al. with their fully kernel-based approach to dynamic model learning \cite{LANDO}, Gelß et al. published the kernel Feature Space Approximation (kFSA) method that once again mixes TT-formulated feature maps to kernel methods to provide a high-dimensional regularization (and model reduction) technique. We further investigated the work of Engel et al. in signal processing \cite{KRLS} that develop a kernel recursive least squares algorithm for online learning of signals. In their work, they provide several useful results, noticeably on the limitation of the number of retained signals (finite dictionary) and a bound with respect to $\ell_2$ optimality provided by the kernel Principal Component Analysis (kPCA). The evaluation of distance in the feature space that defines their Almost Linearly Dependent (ALD) criterion is the same later used by Gelß and Baddoo in their respective works. We use the same approach in our framework.
    
    In this work, we keep the philosophy of structuring the equations that build the regression operator from a separated variables perspective. Considering the lack of data as a structural constraint, we isolate it in the structure of the proposed operator and define a \emph{partial} tensor product. We consider it an algebraically minimalistic approach to generalize matrix-based regression. It allows us to preserve properties of usual matrix-based linear systems while avoiding storage of- and computations with the exponentially large operator. By circumventing the fixed point strategy usually encountered in high dimensional contexts, we can produce the minimum-norm least squares solution from available data at a very low cost. We present the approach as \emph{algebraically minimalistic} for its conceptual simplicity compared to tensor networks or abstract kernel feature spaces for instance. Working with tensor algebra and \emph{then} identifying the kernel from the dot products also allows us to avoid the usual limitation of kernel methods to a few well-known kernels while leveraging their properties. For instance we are able to use the properties of kPCA. Furthermore, the sought minimalism allows us to avoid the necessity for sparse operators in practice for computational efficiency in the case of TT-formulation. The presented regularizations allow to mitigate kernel ill-conditioning and prove more efficient in high-dimensional contexts (more than 100 parameters) than TT-based techniques.

    The proposed approach is structured around a tensor formalism from which we derive nice algebraic properties and computational tricks. We call it the \emph{m-tensor} format as it directly relates to the unmanageable \emph{m}atrix that would be built naively to solve the problem. This is how we retain many of the desirable properties of the matrix-based approach in a fully (tensor-)factorized manner. By preserving a structure of higher order, we enjoy the great efficiency of tensor-based approaches. Those properties would otherwise be lost by the matrix-based structure. As it is tailored for regression operators, this formalism is not investigated as a potential data structure for high order problems. The tools used in this paper are available on GitHub\footnotemark{} along with numerical examples.

    \footnotetext{GitHub repository for the toolbox and examples: \href{this repository}{https://github.com/RemiC-OV/MTensor}.}

    The paper is organized as follows. In \cref{sec:approx_scalar}, we set the problem of the least squares approximation of a scalar function in a general form. \Cref{sec:sep_variables} builds the link between separable variables and tensor structure, this leads us to present the tensor operations we will use within our framework. The m-tensor format is presented in the case of least squares regression in \cref{sec:MT_lstsq}. We provide a general numerical analysis of the method in \cref{sec:numerical_stability}. \Cref{sec:MT_regul} provides different regularization techniques for m-tensor regression. We apply our method to demonstrate scalability and robustness in \cref{sec:numex} on a polynomial benchmark with growing dimension, comparing with existing approaches. \Cref{sec:conclu} draws conclusions and presents perspectives for future works.


    \subsection*{Notations}
    We present the notations adopted in the remainder of the paper. Operation symbols are given with their definition. Working with high dimensional objects relies on burdensome notations in order to disambiguate definitions, therefore we try to strictly stick to conventions given here. Uppercase bold symbols (\emph{e.g.} $\mathbf{T}$) are used for tensors. Uppercase symbols (\emph{e.g.} $M$) are used for matrices. Lowercase bold symbols (\emph{e.g.} $\mathbf{v}$) are used for vectors. The notation $\left[\ \cdot\ \right]_{ijk\dots}$ denotes the $ijk\dots$-th element of the tensor in brackets, each index evolving along an axis of the tensor. Due to the high order nature of the object of the paper, indices are more frequently subscripted with axis number, \emph{i.e.} elements of a tensor $\mathbf{T}$ of order $n$ will be denoted as $[\mathbf{T}]_{i_1i_2\dots i_n}$. We denote the identity of size $m$ as $\Id_m$ and the vector of ones of size $m$ as $\mathds{1}_m$.

\section{Least squares approximation of a scalar function}
\label{sec:approx_scalar}

    In order to provide a general approach, let $\mathcal{D}\subset\mathbb{R}^{d}$ denote a $d$-dimensional domain, defined as the cartesian product of one-dimensional intervals in $\mathbb{R}$, \emph{i.e.} $\mathcal{D}=\times_{i=1}^d[a_i, b_i], \quad [a_i, b_i]\subset\mathbb{R}.$ The aim is to approximate a scalar function $f$ defined over that domain. Let $\varphi_{1\leq i\leq\mathcal{N}}$ denote basis functions to be chosen, we give an approximation of $f$ with the form
    \begin{equation}
        \label{eq:ftilde}
            \hat{f}(\mathbf{x})=\sum_{i=1}^\mathcal{N} c_i\varphi_i(\mathbf{x})=\pmb{\varphi}^\intercal(\mathbf{x})\mathbf{c},
    \end{equation}
    where coefficients $\mathbf{c}$ are the \emph{best fit} (in the least squares sense) for the chosen basis functions. The number of terms $\mathcal{N}$ depends on the choice of bases as well. Let $\mathcal{P}$ denote a set of $m$ distinct samples in $\mathcal{D}$ such that 
    $$\mathcal{P}=\left\{\mathbf{x}_k\in\mathcal{D},\ \forall\ k,\ 1\leq k\leq m\right\}.$$
    Consider we are given the evaluation of $f$ at each sample of $\mathcal{P}$. We denote $\mathbf{y}\in\mathbb{R}^m$ the vector defined as
    $$[\mathbf{y}]_k=f(\mathbf{x}_k)\quad \forall\ k,\ 1\leq k\leq m,\ \mathbf{x}_k\in\mathcal{P}.$$
    Let $\Phi\in\mathbb{R}^{m\times \mathcal{N}}$ denote the evaluation of bases $\pmb{\varphi}$ used in \cref{eq:ftilde} over the set $\mathcal{P}$. With $[\Phi]_k$ denoting the $k$-th row of $\Phi$, we have
    $$[\Phi]_k=\pmb{\varphi}^\intercal(\mathbf{x}_k)\quad \forall\ k,\ 1\leq k\leq m,\ \mathbf{x}_k\in\mathcal{P}.$$
    Then, the least squares solution $\hat{\mathbf{c}}$ is the one that, for any given bases $\pmb{\varphi}$, minimizes the squared $\ell_2$ error with respect to $\mathbf{y}$:
    \begin{equation}
        \hat{\mathbf{c}}=\underset{\mathbf{c}}{\operatorname{argmin}}\left\lVert \sum_{k=1}^m\pmb{\varphi}^\intercal(\mathbf{x}_k)\mathbf{c}-[\mathbf{y}]_k\right\rVert_2^2.
        \label{eq:lstsq}
    \end{equation}
    This minimization problem admits an explicit solution based on whether $\Phi$ has linearly independent columns \cref{eq:lstsqindepcol} or rows \cref{eq:lstsqindeprow} that we build using the Moore-Penrose inverse formulation \cite{penrose}:
    \begin{align}
        \label{eq:lstsqindepcol}
        \hat{\mathbf{c}}&=\left(\Phi^\intercal\Phi\right)^{-1}\Phi^\intercal\mathbf{y},\\
        \label{eq:lstsqindeprow}
        \hat{\mathbf{c}}&=\Phi^\intercal\left(\Phi\Phi^\intercal\right)^{-1}\mathbf{y}.
    \end{align}
    We use this widely used explicit formulation of least squares solution to determine the coefficients $\mathbf{c}$ of the approximation. Without any prior knowledge of $f$, the aim is then to define basis functions $\varphi$ such that $f$ is well approximated. To that end, one can choose to tailor the functions in order to minimize $\mathcal{N}$ or provide a rich enough set of bases so that all features of $f$ can be captured. In what follows we choose the latter and show how our tensorial framework alleviates the curse of dimensionality within the regression approach.

\section{Separable variables and tensor structure}
\label{sec:sep_variables}
    
    A popular approach in high dimension is the separation of variables. In this setting, the basis functions $\varphi$ are sought as products of one-dimensional functions. We denote those one-dimensional functions $\psi$ and define multidimensional bases $\varphi$ as
    $$\forall\ i,\ 1\leq i\leq \mathcal{N},\ \varphi_i(\mathbf{x})=\prod_{j=1}^d\psi_{ij}([\mathbf{x}]_j).$$
    Like all combinatorial problems, no standard approach would scale well in order to compute all combinations of one-dimensional basis functions. Therefore we demonstrate the use of a convenient tensor framework. Let $\pmb{\psi}\in\mathbb{R}^{p}$ denote the vector of $p$ one-dimensional basis functions deemed rich enough to represent the problem in one dimension given as
    $$\pmb{\psi}(x)=\begin{bmatrix}
        \psi_1(x) & \psi_2(x) & \cdots &\psi_p(x)
    \end{bmatrix}.$$
    All combinations of one-dimensional bases would be found without redundancy in the elements of the tensor $\pmb{\varphi}$ defined as 
    $$\pmb{\varphi}(\mathbf{x})=\bigotimes_{j=1}^d\pmb{\psi}([\mathbf{x}]_j).$$
    We note that $\pmb{\varphi}\in\mathbb{R}^{p\times p\times\dots\times p}$ is a rank-1 tensor of order $d$. Having the same set of basis functions along each axis is an arbitrary choice made to alleviate notations. Based on this representation, the formulation of the approximation of $f$ in \eqref{eq:ftilde} can be written as the inner product
    \begin{equation}
        \label{eq:tensor_approx}
        \hat{f}(\mathbf{x})=\langle\pmb{\varphi}(\mathbf{x}),\ \mathbf{C}\rangle.
    \end{equation}
    Here $\mathbf{C}\in\mathbb{R}^{p\times p\times \dots\times p}$ is an order $d$ tensor of coefficients. The least squares coefficient tensor $\hat{\mathbf{C}}$ is the one minimizing the $\ell_2$ error with respect to $\mathbf{y}$:
    \begin{equation}
        \label{eq:tensor_minimization}
        \hat{\mathbf{C}}=\underset{\mathbf{C}}{\operatorname{argmin}}\left\lVert \sum_{k=1}^m\langle\pmb{\varphi}(\mathbf{x}_k),\ \mathbf{C}\rangle-[\mathbf{y}]_k\right\rVert_2^2.
    \end{equation}
    If we were to vectorize this tensor formulation, we would find the expression given for \eqref{eq:ftilde} with $\mathcal{N}=p^d$. The most important aspect is therefore the ability to only use factorized formulations for problems that would otherwise be inaccessible memory- and computationally-wise. The unmanageable nature of the least squares solution $\hat{\mathbf{C}}$ has motivated research on various tensor formats \cite{koldabook,hackbush,kolda}. The general strategy being to truncate the rank of the solution either in the \emph{canonical} sense of tensor rank, or in the \emph{multilinear} one. The former view of rank is the one used in PGD or CP-ALS while the latter is the one found in tensor trains or Tucker representation. In what follows, our aim is to retain the full least squares approximation in an efficient manner, hence our \emph{optimality} claim.

\section{M-Tensor least squares regression}
\label{sec:MT_lstsq}

    In this section, we define elementary operations to be used in what we refer to as \emph{the m-tensor formalism} to efficiently compute a multidimensional least squares regression. The aim is to make the tensor least squares \eqref{eq:tensor_minimization} manageable in arbitrarily high dimensions in the scarce data context, \emph{i.e.} underdetermined problem.

    \subsection{Definitions}
        We first define a row-wise tensor product. Further in this paper we refer to it as the \emph{m-tensor product}. It generalizes the face-splitting product \cite{face-splitting} (transposed Khatri-Rao) the same way the tensor product generalizes the matrix Kronecker product. Our approach uses a similar structure as the face-splitting product but preserves its multidimensionality, therefore avoiding the index mapping burden.

        \begin{definition}\emph{(Row-wise tensor product)}.
            For a given set of matrices $\{\Psi_i\in\mathbb{R}^{m\times p_i}\}_{i=1}^d$, we define their row-wise tensor product, building an order $d+1$ tensor $\mathbf{\Phi}\in\mathbb{R}^{m\times p_1\times p_2\times\dots\times p_d}$ as\emph{
            \begin{equation*}
                [\mathbf{\Phi}]_{ki_1 i_2 \dots i_d}=\left[\Lotimes_{j=1}^d\Psi_j\right]_{ki_1 i_2 \dots i_d}=\prod_{j=1}^d \left[\Psi_j\right]_{ki_j}.
            \end{equation*}}
        \end{definition}
        We show in this paper how well-suited tensors built that way are for multilinear regression. We refer to the matrices $\{\Psi_i\}_{i=1}^d$ as \emph{cores} of the \emph{m-tensor} $\mathbf{\Phi}$. We call \emph{rows} the rank-1 tensors resulting from the row-wise tensor product and \emph{columns} the mode-1 fibers of the tensor.
        
        \begin{proposition}
            \label{prop:equiv_matrix}
            The mode-1 unfolding of this tensor, denoted $\mathbf{\Phi}_{(1)}$, results in a dense matrix in $\mathbb{R}^{m\times p_1p_2\dots p_d}$. That unfolding is \emph{exactly} equal to the face-splitting product of its cores defined using notations introduced by Slyusar \cite{face-splitting} as
            $$\mathbf{\Phi}_{(1)}=\Psi_1\Box \Psi_2\Box\dots\Box \Psi_d.$$
        \end{proposition}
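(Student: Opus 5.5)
The argument is a direct entrywise comparison, and the only real work is fixing the index conventions. For the mode-1 unfolding I will use the standard convention of Kolda and Bader \cite{kolda}: entry $(k,i_1,\dots,i_d)$ of $\mathbf{\Phi}$ is sent to row $k$ and column
\[
\ell(i_1,\dots,i_d) \;=\; 1+\sum_{j=1}^d (i_j-1)\prod_{q=1}^{j-1}p_q ,
\]
where the first index varies fastest. For the face-splitting product I will use Slyusar's definition: row $k$ of $A\Box B$ is the Kronecker product of the $k$-th rows, $[A\Box B]_k=[A]_k\otimes[B]_k$. Iterating this gives
\[
[\Psi_1\Box\dots\Box\Psi_d]_k \;=\; [\Psi_1]_k\otimes\dots\otimes[\Psi_d]_k .
\]
The first thing I will check is that this operation is associative, which follows from associativity of the Kronecker product applied row by row. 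Consequently the $d$-fold product is well defined without parenthesization.

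Next I will compute both sides at a generic position $(k,\ell)$. On the left, by the definition of the row-wise tensor product,
\[
[\mathbf{\Phi}_{(1)}]_{k\ell} \;=\; [\mathbf{\Phi}]_{k i_1\dots i_d} \;=\; \prod_j [\Psi_j]_{k i_j}.
\]
On the right, the Kronecker product of vectors $\mathbf{a}_1\otimes\dots\otimes\mathbf{a}_d$ has, at the position corresponding to the multi-index $(i_1,\dots,i_d)$, the entry $\prod_j [\mathbf{a}_j]_{i_j}$. Taking $\mathbf{a}_j=[\Psi_j]_k$ therefore gives the same product. Both matrices have size $m\times p_1\cdots p_d$, so equality of all entries gives the claim. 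Denseness is then immediate, since no structural zeros are introduced; it is really an observation rather than something to prove.

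The main obstacle is index-ordering compatibility. The standard matrix Kronecker product $\mathbf{a}\otimes\mathbf{b}$ places $\mathbf{b}$'s index fastest, i.e.\ it is last-index-fastest. Kolda's unfolding, by contrast, is first-index-fastest. The two orderings agree only if the unfolding uses the reverse lexicographic convention, or if the Kronecker chain is reversed. I will handle this by explicitly fixing the unfolding's column ordering to be the lexicographic ordering in which $i_d$ varies fastest. This is also the row-major, C-ordering reshape used in practice. With that choice the column map $\ell$ matches the Kronecker product's column indexing exactly, by induction on $d$ using $\ell(i_1,\dots,i_d)=(\ell(i_1,\dots,i_{d-1})-1)p_d+i_d$. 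I will also remark that under the other convention the identity holds with the cores in reversed order. That version is equivalent up to a fixed column permutation and plays the same role for the regression operator.
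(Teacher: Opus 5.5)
Your proof is correct and follows the same route as the paper, which only remarks that the identity follows ``element-wise from both definitions''. Your explicit handling of column ordering improves on that: the identity holds exactly only for the last-index-fastest (row-major) unfolding, and holds with the cores reversed, $\Psi_d\Box\dots\Box\Psi_1$, under the Kolda--Bader convention. Make the write-up consistent, though: the $\ell$ displayed in your first paragraph is the first-index-fastest map, so replace it with $\ell(i_1,\dots,i_d)=1+\sum_{j=1}^{d}(i_j-1)\prod_{q=j+1}^{d}p_q$, which is the map your recursion actually describes.
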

        This can easily be demonstrated element-wise from both definitions. We introduce the transposition definition that will later allow us to copy matrix operations formalism with tensors.

        \begin{definition}\emph{(Transposition).}
            \label{def:transposition}
            Let $\mathbf{\Phi}\in\mathbb{R}^{m\times p_1\times p_2\times\dots\times p_d}$ built as the m-tensor product of cores $\{\Psi_j\}_{j=1}^d$, its transpose is given as
            $$\mathbf{\Phi}^\intercal\in\mathbb{R}^{p_d\times p_{d-1}\times\dots\times p_1\times m}, \text{ with }\left[\mathbf{\Phi}^\intercal\right]_{i_di_{d-1}\dots i_1k}=\prod_{j=1}^d\left[ \Psi_j\right]_{ki_j}.$$
        \end{definition}
        A key element of our framework is to be able to efficiently act on a massive amount of combinations of basis functions seamlessly. To do so we seek to obtain operations equivalent to the matrix ones. We first settle elements towards a m-tensor equivalent to matrix multiplication by adapting the general definition of tensor inner product to our case, \emph{i.e.} by leveraging the efficiency of rank-1 tensor operations.
        \begin{definition}\emph{(Inner product).}
            \label{def:inner_product}
            Let $\pmb{\varphi}_1, \pmb{\varphi}_2\in\mathbb{R}^{p_1\times p_2\times\dots\times p_{d}}$ be two arbitrary tensors. Their inner product is defined as\emph{
            \begin{equation*}
                \langle\pmb{\varphi}_1,\ \pmb{\varphi}_2\rangle=\sum_{i_1=1}^{p_1}\sum_{i_2=1}^{p_2}\dots\sum_{i_d=1}^{p_d}[\pmb{\varphi}_1]_{i_1i_2\dots i_d}[\pmb{\varphi}_2]_{i_1i_2\dots i_d}.
            \end{equation*}}
        \end{definition}
        That computation scales as $\mathcal{O}(p^d)$ considering $\forall i,\ p_i=p$. This definition naturally leads to the norm $\lVert\bphi\rVert=\sqrt{\langle\bphi,\bphi\rangle}$. We identify the specific case of the inner product of rank-1 tensors that we further leverage in the case of m-tensors.
        
        \begin{proposition}
            \label{prop:rank1_inner_product}
            Let $\pmb{\varphi}_1, \pmb{\varphi}_2\in\mathbb{R}^{p_1\times p_2\times\dots\times p_{d}}$ defined as rank-1 tensors, respectively
            $$\pmb{\varphi}_1=\bigotimes_{i=1}^d\mathbf{u}_i^1,\quad\text{and}\quad\pmb{\varphi}_2=\bigotimes_{i=1}^d\mathbf{u}_i^2,$$
            where $\forall\ i,\ 1\leq i\leq d,\ \mathbf{u}^{1,2}_{i}\in\mathbb{R}^{p_i}$. Then the complexity of their inner product falls to $\mathcal{O}(dp)$ considering $\forall i,\ p_i=p$ for ease of notation.
        \end{proposition}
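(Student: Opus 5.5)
The plan is to compute the inner product of the two rank-1 tensors directly from \cref{def:inner_product} and show that the multi-index sum factorizes into $d$ independent one-dimensional sums.

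First, write the entries of the two tensors in elementwise form:
$$[\pmb{\varphi}_1]_{i_1\dots i_d}=\prod_{j=1}^d[\mathbf{u}_j^1]_{i_j}, \qquad [\pmb{\varphi}_2]_{i_1\dots i_d}=\prod_{j=1}^d[\mathbf{u}_j^2]_{i_j}.$$
Substituting into the definition, the summand becomes
$$\prod_{j=1}^d[\mathbf{u}_j^1]_{i_j}[\mathbf{u}_j^2]_{i_j}.$$
Each factor depends on only one summation index $i_j$. By the distributivity of multiplication over finite sums (a generalized Fubini argument for finite sums), the iterated sum can be exchanged with the product. This gives the identity
$$\langle\pmb{\varphi}_1,\pmb{\varphi}_2\rangle=\prod_{j=1}^d\langle\mathbf{u}_j^1,\mathbf{u}_j^2\rangle.$$
The only real content is this factorization. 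I would justify it by a short induction on $d$: peel off the sum over $i_d$, and note that the factor indexed by $i_d$ is constant with respect to the remaining indices.

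Second, count the operations in the factorized expression. Each vector inner product $\langle\mathbf{u}_j^1,\mathbf{u}_j^2\rangle$ costs $p$ multiplications and $p-1$ additions, which is $\mathcal{O}(p)$. Computing all $d$ of them costs $\mathcal{O}(dp)$, and multiplying the $d$ resulting scalars adds $d-1$ operations. The total is therefore $\mathcal{O}(dp)$. This is to be contrasted with the $\mathcal{O}(p^d)$ cost of evaluating \cref{def:inner_product} directly on the full tensors.

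I do not expect a genuine obstacle. The one point to state explicitly is the complexity model: the tensors are assumed to be given through their factors $\mathbf{u}_i^{1,2}$ and never assembled. Forming the full tensors would itself cost $\mathcal{O}(p^d)$, so the claimed complexity only holds in this factorized setting, which is the one the paper works in.
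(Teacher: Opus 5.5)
Your proof is correct and takes the same approach as the paper. Both factor the multi-index sum of \cref{def:inner_product} into $\prod_{i=1}^d(\mathbf{u}_i^1)^\intercal\mathbf{u}_i^2$ and then count $d$ vector inner products of cost $\mathcal{O}(p)$ each. Your induction argument for the factorization and your remark that the tensors must stay in factored form make explicit what the paper leaves implicit.
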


        \begin{proof}
            From the rank-1 structure of $\pmb{\varphi}_1$ and $\pmb{\varphi}_2$ and \cref{def:inner_product}, we have
            $$\langle\pmb{\varphi}_1,\ \pmb{\varphi}_2\rangle=\prod_{i=1}^d(\mathbf{u}_i^1)^\intercal\mathbf{u}_i^2.$$
            This simplifies the $d$-dimensional problem and allows to separately evaluate $d$ vector inner products, hence the $\mathcal{O}(dp)$ complexity scaling.
        \end{proof}
        From that property, we can demonstrate the efficiency of m-tensor operations required for our regression framework. We then develop the generalization of matrix product for tensors through the \emph{d-way contracted product}.

        \begin{definition}\emph{(D-way contracted product).}
            \label{def:dway_product}
            Let $\mathbf{\Phi}_1\in\mathbb{R}^{m_1\times p_1\times p_2\times\dots\times p_{d}}$ and $\mathbf{\Phi}_2\in\mathbb{R}^{m_2\times p_1\times p_2\times\dots\times p_{d}}$, their d-way contracted product, denoted $\ltimes$, is given as
            $$\mathbf{\Phi}_1\ltimes\mathbf{\Phi}_2^\intercal=(\mathbf{\Phi}_1)_{(1)}(\mathbf{\Phi}_2)_{(1)}^{\intercal}$$
        \end{definition}
        In a general setting, that product is computed with complexity $\mathcal{O}(m^2p^d)$, by considering for ease of notation that $m_1=m_2=m$ and $\forall i,\ p_i=p$. We can however give a special case when considering m-tensors.

        \begin{proposition}
            \label{prop:dway_product_mtensors}
            Let $\mathbf{\Phi}_1\in\mathbb{R}^{m_1\times p_1\times p_2\times\dots\times p_{d}}$ and $\mathbf{\Phi}_2\in\mathbb{R}^{m_2\times p_1\times p_2\times\dots\times p_{d}}$ be m-tensors respectively based on cores $\{\Psi_i^1\}_{1\leq i\leq d}$ and $\{\Psi_i^2\}_{1\leq i\leq d}$. Their d-way contracted product is computed as 
            \begin{equation*}
                \label{eq:m-tensor-product}
                \mathbf{\Phi}_1\ltimes\mathbf{\Phi}_2^\intercal=\bigodot_{i=1}^d\Psi_i^1(\Psi_i^2)^\intercal.
            \end{equation*}
            Considering, for ease of notation that $m_1=m_2=m$ and $\forall i,\ p_i=p$, the complexity of this product is $\mathcal{O}(m^2dp)$.
        \end{proposition}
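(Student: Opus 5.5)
The plan is to verify the identity entrywise and then count operations, using \cref{prop:rank1_inner_product} as the key ingredient.

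Fix indices $k$ and $l$ with $1\leq k\leq m_1$ and $1\leq l\leq m_2$. By \cref{def:dway_product}, the $(k,l)$ entry of $\mathbf{\Phi}_1\ltimes\mathbf{\Phi}_2^\intercal$ is the product of the $k$-th row of $(\mathbf{\Phi}_1)_{(1)}$ with the $l$-th row of $(\mathbf{\Phi}_2)_{(1)}$. Un-vectorizing this, it equals the tensor inner product of \cref{def:inner_product} between the two order-$d$ slices $[\mathbf{\Phi}_1]_{k\,:\,\dots\,:}$ and $[\mathbf{\Phi}_2]_{l\,:\,\dots\,:}$. Since the mode-1 unfolding uses the same index ordering for both tensors, this step is only bookkeeping.

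By the definition of the row-wise tensor product, these slices are the rank-1 tensors
$$\bigotimes_{i=1}^d [\Psi_i^1]_{k}^\intercal \quad\text{and}\quad \bigotimes_{i=1}^d [\Psi_i^2]_{l}^\intercal,$$
where $[\Psi]_k$ denotes the $k$-th row. The proof of \cref{prop:rank1_inner_product} then gives
$$\left[\mathbf{\Phi}_1\ltimes\mathbf{\Phi}_2^\intercal\right]_{kl}=\prod_{i=1}^d [\Psi_i^1]_k[\Psi_i^2]_l^\intercal=\prod_{i=1}^d\left[\Psi_i^1(\Psi_i^2)^\intercal\right]_{kl}.$$
This is exactly the $(k,l)$ entry of the Hadamard product $\bigodot_{i=1}^d \Psi_i^1(\Psi_i^2)^\intercal$. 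Because the indices are arbitrary, the identity holds.

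For the complexity:
\begin{itemize}
\item Each of the $d$ matrix products $\Psi_i^1(\Psi_i^2)^\intercal$ costs $\mathcal{O}(m^2p)$, so computing all of them costs $\mathcal{O}(m^2dp)$.
\item The $d-1$ Hadamard products of $m\times m$ matrices cost $\mathcal{O}(m^2d)$ in total.
\end{itemize}
The total is therefore $\mathcal{O}(m^2dp)$. Equivalently, this is $m^2$ rank-1 inner products at $\mathcal{O}(dp)$ each.

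The only delicate point is the first step: one must check that the mode-1 unfolding convention matches the indexing of the rank-1 slices, so that the matrix row product really is the tensor inner product. The rest follows directly from the earlier results.
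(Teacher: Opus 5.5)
Your proof is correct and follows the same route as the paper. Both read Definition~\ref{def:dway_product} entrywise, identify each entry as an inner product of rank-1 tensors, factor it via Proposition~\ref{prop:rank1_inner_product} into the Hadamard product of the Gram matrices $\Psi_i^1(\Psi_i^2)^\intercal$, and obtain $\mathcal{O}(m^2dp)$ from $m^2$ inner products at $\mathcal{O}(dp)$ each; you are simply more explicit about the indexing and cost the $d$ Gram products and the Hadamard products separately.
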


        \begin{proof}
            Starting from the general case of d-way contracted product given in \cref{def:dway_product} and proceeding element-wise, we identify inner products of rank-1 tensors, whose complexity scales as $\mathcal{O}(dp)$ as proven for \cref{prop:rank1_inner_product}. Considering $m_1=m_2=m$ we have $m^2$ of those products to compute. This amounts to computing the Hadamard product of Gram matrices as given in the proposition, hence the complexity $\mathcal{O}(m^2dp)$ considering $m_1=m_2=m$ and $\forall\ i, 1\leq i\leq d,\ p_i=p$.
        \end{proof}

    \subsection{Least squares approximation}
        From those definitions we can build our framework. Provided the set of sampled parameters $\mathcal{P}=\{\mathbf{x}_k\}_{k=1}^m$ and corresponding evaluations of $f$ in a vector $\mathbf{y}$, we define the basis functions tensor as
        $$\pmb{\varphi}(\mathbf{x})=\bigotimes_{i=1}^d\pmb{\psi}([\mathbf{x}]_i),\quad\text{with}\quad\pmb{\psi}:x\mapsto\begin{bmatrix}
            \psi_1(x)&\psi_2(x)&\dots&\psi_p(x)
        \end{bmatrix}.$$
        This structure amounts to combining one-dimensional bases $\psi_{1\leq i\leq p}$ for all input parameters. This gives us a large approximation capability, unreachable with a matrix-based regression. We define the cores
        \begin{equation*}
            \forall i,\ 1\leq i\leq d,\quad\Psi_i=
            \begin{bmatrix}
                \psi_1([\mathbf{x}_1]_i) & \psi_2([\mathbf{x}_1]_i) & \dots & \psi_p([\mathbf{x}_1]_i)\\
                \psi_1([\mathbf{x}_2]_i) & \psi_2([\mathbf{x}_2]_i) & \dots & \psi_p([\mathbf{x}_2]_i)\\
                \vdots & \vdots & \ddots & \vdots \\
                \psi_1([\mathbf{x}_m]_i) & \psi_2([\mathbf{x}_m]_i) & \dots & \psi_p([\mathbf{x}_m]_i)
            \end{bmatrix}.
        \end{equation*}
        Mappings can be defined differently for each variable. Note that by providing nonlinear mappings, the (multi-)linear system we build is able to approximate nonlinear models in a non-iterative manner. That definition of the cores allows us to define the m-tensor regression operator $\mathbf{\Phi}$ as
        \begin{equation}
            \label{eq:tensorphi}
            \mathbf{\Phi}=\Lotimes_{i=1}^d \Psi_i.
        \end{equation}
        We can now reformulate the regression problem in m-tensorial setting. We seek the regression coefficients $\mathbf{C}\in\mathbb{R}^{p\times p\times\dots\times p}$ such that
        \begin{equation}
            \label{eq:tensorreg}
            \mathbf{\Phi}\ltimes\mathbf{C}=\mathbf{y}.
        \end{equation}
        A least squares solution $\mathbf{\hat{C}}$ to this problem can be built explicitly based on the general pseudoinverse formulation \cref{eq:lstsqindeprow}. Provided that $\mathbf{\Phi}$ has linearly independent rows, we have the solution
        \begin{equation}
            \label{eq:tensorsolution}
            \hat{\mathbf{C}}=\mathbf{\Phi}^\intercal\left(\mathbf{\Phi}\ltimes\mathbf{\Phi}^\intercal\right)^{-1}\mathbf{y}.
        \end{equation}
        We made our approach fully equivalent to the matrix framework while taking advantage of the tensor factorization, therefore that definition holds in any dimension. We consider the linearly independent rows case because, in general, the exponential growth of the system with $d$ will prevent it to have linearly independent columns as it would require $m$ to grow exponentially with $d$. This makes our approach well defined in the under-determined case. 
        
        We generally cannot use coefficients formulated as \cref{eq:tensorsolution}. For that reason, we readily build the approximation of $f$ as 
        \begin{equation}
            \hat{f}(\mathbf{x})=\pmb{\varphi}(\mathbf{x})\ltimes\mathbf{\Phi}^\intercal\left(\mathbf{\Phi}\ltimes\mathbf{\Phi}^\intercal\right)^{-1}\mathbf{y}.
        \end{equation}
        This expression shows explicitly how it results only in efficient operations. In fact, by computing 
        \begin{equation}
            \mathbf{z}=(\mathbf{\Phi}\ltimes\mathbf{\Phi}^\intercal)^{-1}\mathbf{y}\in \mathbb{R}^{m},
        \end{equation}
        we evaluate $\hat{f}$ only by computing $\pmb{\varphi}(\mathbf{x})\ltimes\mathbf{\Phi}^\intercal\in\mathbb{R}^{m}$ and its inner product with $\mathbf{z}$:
        \begin{equation}
            \hat{f}(\mathbf{x})=\langle\pmb{\varphi}(\mathbf{x})\ltimes\mathbf{\Phi}^\intercal,\ \mathbf{z}\rangle.
        \end{equation}
        This evaluation's complexity scales as $\mathcal{O}(mdp)$. The result provided is optimal in the least squares sense for the selected mappings and given sampled points.

    \subsection{Numerical analysis}
        \label{sec:numerical_stability}
        In this section we analyze the influence of the structure we propose on roundoff error, noise sensitivity and conditioning. For the setting, we consider that our regression operator $\mathbf{\Phi}$ is built upon a sample $\mathcal{P}$ of $m$ elements and a tensorized mapping $\pmb{\varphi}=\bigotimes_{i=1}^d\pmb{\psi}_i$. We assume the mappings $\{\pmb{\psi}_i\in\mathbb{R}^{p}\}_{i=1}^d$ to be able to capture $f$, \emph{i.e.}
        $$\exists\ \mathbf{C}\in\mathbb{R}^{p\times p\times \dots\times p},\ \forall\ \mathbf{x}\in\mathbb{R}^d,\quad f(\bx)=\pmb{\varphi}(\mathbf{x})\mathbf{C}.$$
        In that section we drop the $\ltimes$ for ease of reading. All proofs are developed in \cref{sec:appendix:stability}. We assume the function $f$, evaluated over $\mathcal{P}$ in the vector $\mathbf{y}$ and denote the approximation we build $\tilde{f}(\bx)=\bphi(\bx)(\bC+\dbC)$. We will consider input perturbation $[\dbx]_{1\leq i\leq d}=\mathcal{O}(\epsilon_\text{machine})$ to evaluate its effect of the stability of our approach. In what follows, we consider approximations by neglecting at least doubly infinitesimal quantities, \emph{i.e.} we consider first order approximations.
        \begin{theorem}\emph{(Bound on $\bphi(\bx+\dbx)$).}
            \label{th:bound_phixdx}
            Let $\pmb{\xi}_{i=1}^d:\mathbb{R}^d\times\mathbb{R}^d\rightarrow\mathbb{R}^{p}$ denote a 1D mapping error due to input perturbation such that the separated variables representation yields
            \begin{equation*}
                \bphi(\bx+\dbx)=\bigotimes_{i=1}^d\bpsi_i([\bx]_i+[\dbx]_i)=\bigotimes_{i=1}^d\bpsi_i([\bx]_i)+\pmb{\xi}_i([\bx]_i, [\dbx]_i).
            \end{equation*}
            Let the 1D mappings $\bpsi_{i=1}^d$ be normalized element-wise, then the evaluation error of $\bphi$ is bounded as
            \begin{equation*}
                \lVert\bphi(\bx+\dbx)-\bphi(\bx)\rVert_F\lesssim\sqrt{p^{d-1}}\sum_{i=1}^d\lVert\pmb{\xi}_i([\bx]_i,[\dbx]_i)\rVert_F.
            \end{equation*}
        \end{theorem}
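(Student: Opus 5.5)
Write $\mathbf{a}_i=\bpsi_i([\bx]_i)$ and $\pmb{\xi}_i=\pmb{\xi}_i([\bx]_i,[\dbx]_i)$. The idea is to expand the tensor product of perturbed factors multilinearly and keep only first-order terms. This gives
\begin{equation*}
\bigotimes_{i=1}^d(\mathbf{a}_i+\pmb{\xi}_i)=\bigotimes_{i=1}^d\mathbf{a}_i+\sum_{j=1}^d \mathbf{a}_1\otimes\dots\otimes\mathbf{a}_{j-1}\otimes\pmb{\xi}_j\otimes\mathbf{a}_{j+1}\otimes\dots\otimes\mathbf{a}_d+\mathcal{O}(\lVert\pmb{\xi}\rVert^2).
\end{equation*}
Each $\pmb{\xi}_j$ is $\mathcal{O}(\epsilon_\text{machine})$, so all terms containing two or more $\pmb{\xi}$ factors are doubly infinitesimal. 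Following the convention stated before the theorem, we drop them. This is the source of the $\lesssim$.

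Next we bound the first-order sum with the triangle inequality in the Frobenius norm, i.e. the norm induced by \cref{def:inner_product}. Each summand is a rank-1 tensor. By \cref{prop:rank1_inner_product} the norm of a rank-1 tensor factorizes as $\lVert\bigotimes_i\mathbf{u}_i\rVert_F=\prod_i\lVert\mathbf{u}_i\rVert_2$. Hence
\begin{equation*}
\lVert\bphi(\bx+\dbx)-\bphi(\bx)\rVert_F\lesssim\sum_{j=1}^d\lVert\pmb{\xi}_j\rVert_F\prod_{i\neq j}\lVert\mathbf{a}_i\rVert_2 .
\end{equation*}

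It remains to bound $\prod_{i\neq j}\lVert\mathbf{a}_i\rVert_2$ uniformly by $\sqrt{p^{d-1}}$. This is where the element-wise normalization hypothesis enters. I read it as $|\psi_{i,k}(x)|\le 1$ for every component $k$ and every admissible $x$. Then $\lVert\mathbf{a}_i\rVert_2\le\sqrt{p}$ for each $i$, and a product of $d-1$ such factors is at most $\sqrt{p^{d-1}}$. Pulling this constant out of the sum gives the stated bound.

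The main obstacle is making the hypothesis and the first-order step precise. "Normalized element-wise" must be interpreted as a sup-norm bound of $1$ on each basis function, so that it also holds at the evaluation point. The neglected higher-order terms involve up to $2^d$ cross terms. The simplest justification is to invoke the paper's stated first-order convention. Alternatively, one could bound each neglected term by $p^{(d-k)/2}\prod\lVert\pmb{\xi}\rVert$ for $k\ge 2$ perturbed factors and note that this is $\mathcal{O}(\epsilon_\text{machine}^2)$ for fixed $d$.
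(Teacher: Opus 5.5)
Your proposal is correct and takes essentially the paper's route: a first-order multilinear expansion that drops doubly infinitesimal terms, then the triangle inequality and the element-wise bound $|\psi_{i,k}|\le 1$. The paper applies that bound by replacing each unperturbed factor $\bpsi_i([\bx]_i)$ with the worst case $\mathds{1}_p$, which amounts to your $\lVert\mathbf{a}_i\rVert_2\le\sqrt{p}$. Your ordering (triangle inequality first, then the rank-1 factorization $\lVert\bigotimes_i\mathbf{u}_i\rVert_F=\prod_i\lVert\mathbf{u}_i\rVert_2$) is slightly cleaner than the paper's, which substitutes $\mathds{1}_p$ inside the signed sum before splitting it.
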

        \Cref{th:bound_phixdx} demonstrates the necessity to analyze the behaviour of the terms $\pmb{\xi}$ when choosing a mapping so as to minimize input perturbation error. The specific case of \emph{input roundoff error} gives \cref{lem:inputroundoff}.
        \begin{lemma}\emph{(Input roundoff error).}
            \label{lem:inputroundoff}
            Let each input $\bx$ of $\bphi$ be perturbed with $[\dbx]_{1\leq i\leq d}=\mO(\epsilon_\text{machine})$, then there exists a mapping-dependent constant $c$ such that
            $$\lVert\bphi(\bx+\dbx)-\bphi(\bx)\rVert_F\leq dc\epsilon_\text{machine}\sqrt{p^{d}}.$$
        \end{lemma}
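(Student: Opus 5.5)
The plan is to read \cref{lem:inputroundoff} as a direct specialization of \cref{th:bound_phixdx}. Once a bound on each one-dimensional mapping error $\pmb{\xi}_i([\bx]_i,[\dbx]_i)$ is available for perturbations $[\dbx]_i=\mO(\epsilon_\text{machine})$, it is enough to substitute that bound into the right-hand side of the theorem. The whole argument therefore lives in one dimension. It reduces to controlling $\lVert\pmb{\xi}_i\rVert_F=\lVert\bpsi_i([\bx]_i+[\dbx]_i)-\bpsi_i([\bx]_i)\rVert_2$ by a constant times $\epsilon_\text{machine}$.

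First I would assume, as is implicit in the word ``mapping-dependent'', that each 1D basis function $\psi_{ij}$ is differentiable with bounded derivative on $[a_i,b_i]$. The mean value theorem then gives the component bound $|[\pmb{\xi}_i]_j|\leq \sup|\psi_{ij}'|\,|[\dbx]_i|$. Writing $|[\dbx]_i|\leq c_0\epsilon_\text{machine}$, where $c_0$ absorbs the $\mO$ constant and the magnitude of the inputs (bounded since $\mathcal{D}$ is compact), we get
\[
\lVert\pmb{\xi}_i\rVert_F\leq \sqrt{p}\,\max_{j}\sup_{[a_i,b_i]}|\psi_{ij}'|\;c_0\,\epsilon_\text{machine}.
\]
The key point is that the factor $\sqrt{p}$ comes from summing $p$ squared components.

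Next I define $c=c_0\max_{i,j}\sup|\psi_{ij}'|$, taking the maximum over all $d$ mappings and all $p$ functions, so that $\lVert\pmb{\xi}_i\rVert_F\leq c\sqrt{p}\,\epsilon_\text{machine}$ uniformly in $i$. I then insert this into \cref{th:bound_phixdx}:
\[
\lVert\bphi(\bx+\dbx)-\bphi(\bx)\rVert_F\lesssim\sqrt{p^{d-1}}\sum_{i=1}^d c\sqrt{p}\,\epsilon_\text{machine}=dc\,\epsilon_\text{machine}\sqrt{p^{d}}.
\]
This is the claimed estimate. The first-order ``$\lesssim$'' of the theorem becomes ``$\leq$'' once we neglect doubly infinitesimal terms, as stipulated at the start of \cref{sec:numerical_stability}. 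If needed, the higher-order remainders of order $\epsilon_\text{machine}^2$ can also be absorbed into $c$.

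The main obstacle is the justification of the one-dimensional bound. It needs two things: regularity of the basis functions, and consistency with the normalization hypothesis of \cref{th:bound_phixdx}. That hypothesis is what produced the $\sqrt{p^{d-1}}$ factor, through the unperturbed factors having entries bounded by one. I would state explicitly that the element-wise normalization is kept, and that $c$ depends on the Lipschitz constants of the normalized $\psi_{ij}$. Element-wise normalization alone does not bound derivatives; for polynomial bases, for instance, $c$ may grow with the degree $p$. This is exactly why $c$ is called mapping-dependent, and the lemma is only claimed for a fixed mapping.
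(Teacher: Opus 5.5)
Your proposal is correct and follows the paper's proof. The paper likewise assumes an element-wise bound $|[\pmb{\xi}_i]_j|\leq c\,\epsilon_\text{machine}$, hence $\lVert\pmb{\xi}_i\rVert_F\leq c\sqrt{p}\,\epsilon_\text{machine}$, and substitutes it into \cref{th:bound_phixdx} to obtain $\sqrt{p^{d-1}}\,d\,c\sqrt{p}\,\epsilon_\text{machine}=dc\,\epsilon_\text{machine}\sqrt{p^{d}}$ at first order. The one difference is that you justify the element-wise bound via the mean value theorem (bounded derivatives of the $\psi_{ij}$ on a compact domain), whereas the paper simply postulates it as a mapping- and domain-dependent constant.
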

        We draw conclusions on the backward stability of the approach under those assumptions and derive strategies to guarantee stability.
        \begin{theorem}\emph{(Backward stability).}
            \label{theorem:backward_stability}
            Let $\bpsi_{i=1}^d$ denote normalized 1D mappings with separable perturbation-induced errors of the shape $\bpsi_i(x+\delta x)=\bpsi_i(x)+\pmb{\xi}_i(x,\delta x)$. Let $c$ denote a constant such that over its domain of definition 
            $$\lVert\pmb{\xi}_i(x,\delta x)\rVert_F\leq c\sqrt{p}.$$
            Then if $dc\ll 1$, the backward error tends to
            $$\dfrac{\lVert\tilde{f}(\bx)-f(\bx+\dbx)\rVert}{\lVert f(\bx+\dbx)\rVert}\lesssim\dfrac{\lVert\dbC\rVert}{\lVert\bC\rVert}.$$
        \end{theorem}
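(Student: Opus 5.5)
The plan is to split the backward error with the triangle inequality into a perturbation part and a coefficient part, keeping only first-order terms as announced at the start of this section. Write $\tilde f(\bx)-f(\bx+\dbx)=\bphi(\bx)(\bC+\dbC)-\bphi(\bx+\dbx)\bC$. Adding and subtracting $\bphi(\bx)\bC$ gives
\begin{equation*}
\tilde f(\bx)-f(\bx+\dbx)=\bphi(\bx)\,\dbC-\bigl(\bphi(\bx+\dbx)-\bphi(\bx)\bigr)\bC .
\end{equation*}
The first term is the coefficient error. The second term is the mapping error, which \cref{th:bound_phixdx} controls. Applying Cauchy--Schwarz to each inner product yields
\begin{equation*}
\lVert\tilde f(\bx)-f(\bx+\dbx)\rVert\le\lVert\bphi(\bx)\rVert_F\lVert\dbC\rVert+\lVert\bphi(\bx+\dbx)-\bphi(\bx)\rVert_F\lVert\bC\rVert .
\end{equation*}

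Next I bound the mapping term. The hypothesis gives $\lVert\pmb{\xi}_i\rVert_F\le c\sqrt{p}$, and \cref{th:bound_phixdx} then gives
\begin{equation*}
\lVert\bphi(\bx+\dbx)-\bphi(\bx)\rVert_F\lesssim dc\sqrt{p^{d}},
\end{equation*}
which matches \cref{lem:inputroundoff}. For the first term I need a matching size for $\bphi(\bx)$ itself. Element-wise normalization of each $\bpsi_i$ (entries of order one) gives $\lVert\bpsi_i\rVert_F\le\sqrt{p}$. The norm of a rank-1 tensor factorizes as in \cref{prop:rank1_inner_product}, so $\lVert\bphi(\bx)\rVert_F\le\sqrt{p^d}$.

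For the denominator I need a lower bound on $\lVert f(\bx+\dbx)\rVert$ with the same scale. I would write $f(\bx+\dbx)=\bphi(\bx)\bC+(\bphi(\bx+\dbx)-\bphi(\bx))\bC$, so that to first order
\begin{equation*}
\lVert f(\bx+\dbx)\rVert\gtrsim\lVert\bphi(\bx)\bC\rVert-dc\sqrt{p^d}\lVert\bC\rVert .
\end{equation*}
Dividing the numerator bound by this, the relative error is roughly
\begin{equation*}
\frac{\sqrt{p^d}\,\bigl(\lVert\dbC\rVert+dc\lVert\bC\rVert\bigr)}{\lVert\bphi(\bx)\bC\rVert-dc\sqrt{p^d}\lVert\bC\rVert}.
\end{equation*}
When $dc\ll1$, the $dc$ contributions drop out of both numerator and denominator. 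What remains has the form $\sqrt{p^d}\lVert\dbC\rVert/\lVert\bphi(\bx)\bC\rVert$.

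The main obstacle is turning this into $\lVert\dbC\rVert/\lVert\bC\rVert$. That needs $\lVert\bphi(\bx)\bC\rVert\gtrsim\sqrt{p^d}\lVert\bC\rVert$, which Cauchy--Schwarz gives only in the opposite direction. My first attempt is to assume $\bC$ is not nearly orthogonal to $\bphi(\bx)$, so that the angle factor $\lVert\bphi\rVert\lVert\bC\rVert/\lVert\bphi\bC\rVert$ is $\mathcal{O}(1)$. In that case it can be absorbed into the $\lesssim$, in the same first-order, up-to-constants spirit as \cref{th:bound_phixdx}. I would state this condition explicitly as the implicit assumption behind the ``tends to'' formulation, rather than try to prove it in general.
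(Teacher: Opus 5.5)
Your proposal is correct at the paper's first-order, up-to-constants level of rigor and follows essentially the same route as the paper's proof. Both split the numerator into $\bphi(\bx)\dbC$ and $\dbphi\,\bC$, bound the latter by $dc\sqrt{p^d}$ via \cref{th:bound_phixdx}, use $\lVert\bphi(\bx)\rVert\le\sqrt{p^d}$, and let $dc\ll1$.

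The non-orthogonality condition you isolate is precisely what the paper uses silently when it bounds the ratio by $\frac{\lVert\dbC\rVert}{\lVert\bC\rVert}\frac{\lVert\bphi\rVert}{\lVert\bphi+\dbphi\rVert}+\frac{\lVert\dbphi\rVert}{\lVert\bphi+\dbphi\rVert}$, since that replaces $\lVert(\bphi+\dbphi)\bC\rVert$ by $\lVert\bphi+\dbphi\rVert\lVert\bC\rVert$, a reverse Cauchy--Schwarz step. Note that your stated angle condition alone gives only $\lVert\bphi(\bx)\bC\rVert\gtrsim\lVert\bphi(\bx)\rVert\lVert\bC\rVert$, not the $\lVert\bphi(\bx)\bC\rVert\gtrsim\sqrt{p^d}\lVert\bC\rVert$ you say you need. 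You also need $\lVert\bphi(\bx)\rVert$ comparable to $\sqrt{p^d}$, which the paper assumes tacitly as well, and you should invoke this before, not after, discarding the $dc$ term in the denominator. With these assumptions stated explicitly, your version is the more transparent of the two.
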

        \Cref{theorem:backward_stability} tells us that controlling the bounding constant $c$, which is directly derived from the choice of 1D mapping, we can guarantee that the backward error tends purely to the approximation error of the model $\dbC$. The advantage is that a uniform scaling of the bases suffices to lower that constant without impacting the conditioning (for which we design specific regularization strategies in \cref{sec:MT_regul}).
        
        The aim of our method is to piggy-back on the matrix properties of its mode-1 unfolding $\mathbf{\Phi}_{(1)}\in\mathbb{R}^{m\times p^d}$. Thus we may consider the SVD of that unfolding
        \begin{equation}
            \label{eq:Phi1SVD}
            \mathbf{\Phi}_{(1)}=U\Sigma V^\intercal.
        \end{equation}
        We can access $U$ and $\Sigma$ through the eigendecomposition of
        \begin{equation}
            \label{eq:PSVD}
            P=\mathbf{\Phi}\mathbf{\Phi}^\intercal=\mathbf{\Phi}_{(1)}\mathbf{\Phi}_{(1)}^\intercal=U\Sigma^2 U^\intercal.
        \end{equation}
        This gives us an easy read on the effective condition number of $\mathbf{\Phi}_{(1)}$ given by
        \begin{equation}
            \label{eq:cond}
            \kappa(\mathbf{\Phi})=\dfrac{\sigma_1}{\sigma_r},
        \end{equation}
        where $\sigma_1$ denotes the largest singular value in $\Sigma$ and $\sigma_r$ its lowest nonzero singular value. Then we can assess noise sensitivity by assuming the perturbations $[\delta\mathbf{y}]_{k=1}^m$.
        \begin{theorem}\emph{(Noise sensitivity).}
            \label{th:bound_noise}
            Assuming the sampled points $\mP$ are sufficient and that the basis $\bphi$ is well suited to approximate $f$, then the error stemming from data noise $\delta\mathbf{y}$ is bounded as 
            \begin{equation*}
                \lVert\dbC\rVert\leq\dfrac{\lVert\delta\mathbf{y}\rVert}{\sigma_r}.
            \end{equation*}
        \end{theorem}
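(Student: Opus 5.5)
The approach is to use linearity of the minimum-norm solution operator together with the SVD \cref{eq:Phi1SVD} of the mode-1 unfolding. By \cref{eq:tensorsolution}, the coefficients are obtained by applying the linear map $\mathbf{y}\mapsto\mathbf{\Phi}^\intercal(\mathbf{\Phi}\mathbf{\Phi}^\intercal)^{-1}\mathbf{y}$. Replacing $\mathbf{y}$ by $\mathbf{y}+\delta\mathbf{y}$ therefore gives $\bC+\dbC$ with
$$\dbC=\mathbf{\Phi}^\intercal(\mathbf{\Phi}\mathbf{\Phi}^\intercal)^{-1}\delta\mathbf{y}.$$
The hypotheses that $\mP$ is sufficient and that $\bphi$ is well suited play two roles. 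First, they let us identify the noiseless solution with the exact coefficient tensor $\bC$, so the whole error comes from $\delta\mathbf{y}$. Second, they let us assume that $\mathbf{\Phi}_{(1)}$ has linearly independent rows, or else work with the pseudoinverse restricted to its range. In the rank-deficient case the inverse is replaced by the Moore–Penrose pseudoinverse of $P$ from \cref{eq:PSVD}.

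Next, everything is moved to the unfolded setting. By \cref{def:dway_product} and the transposition in \cref{def:transposition}, the Frobenius norm of the tensor $\dbC$ equals the Euclidean norm of its vectorization. That vectorization is
$$\mathrm{vec}(\dbC)=\mathbf{\Phi}_{(1)}^\intercal(\mathbf{\Phi}_{(1)}\mathbf{\Phi}_{(1)}^\intercal)^{+}\delta\mathbf{y},$$
up to the index reversal introduced by the transposition, which is a permutation and hence norm-preserving. Substituting the thin SVD $\mathbf{\Phi}_{(1)}=U_r\Sigma_rV_r^\intercal$ gives
$$\mathbf{\Phi}_{(1)}^\intercal(\mathbf{\Phi}_{(1)}\mathbf{\Phi}_{(1)}^\intercal)^{+}=V_r\Sigma_rU_r^\intercal U_r\Sigma_r^{-2}U_r^\intercal=V_r\Sigma_r^{-1}U_r^\intercal=\mathbf{\Phi}_{(1)}^{+}.$$

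The bound then follows from the operator norm. Since $V_r$ has orthonormal columns and $U_r^\intercal$ is a contraction,
$$\lVert\dbC\rVert=\lVert\Sigma_r^{-1}U_r^\intercal\delta\mathbf{y}\rVert\leq\lVert\Sigma_r^{-1}\rVert_2\lVert\delta\mathbf{y}\rVert=\dfrac{\lVert\delta\mathbf{y}\rVert}{\sigma_r},$$
where $\sigma_r$ is the smallest nonzero singular value, as in \cref{eq:cond}.

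The main obstacle is justifying that the identification between the tensor expression and $\mathbf{\Phi}_{(1)}^{+}$ is legitimate. This requires checking that the m-tensor transpose followed by the contraction with the $m$-vector $(\mathbf{\Phi}\mathbf{\Phi}^\intercal)^{-1}\delta\mathbf{y}$ reproduces $\mathbf{\Phi}_{(1)}^\intercal$ acting on that vector, with only an index permutation. It also requires stating precisely how the hypotheses ensure that the noiseless fit equals $\bC$, so that $\dbC$ isolates the noise contribution. I would handle the first point element-wise from \cref{def:transposition} and \cref{prop:equiv_matrix}. I would handle the second by noting that exactness of $\bphi$ gives $\mathbf{y}=\mathbf{\Phi}\bC$, interpreting $\bC$ as the minimum-norm representative.
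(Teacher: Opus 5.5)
Your proposal is correct and takes the same route as the paper. Both write $\dbC=\mathbf{\Phi}^\intercal(\mathbf{\Phi}\mathbf{\Phi}^\intercal)^{-1}\delta\mathbf{y}$, identify this operator with the pseudoinverse of the mode-1 unfolding $\mathbf{\Phi}_{(1)}$, and bound it by its operator norm $1/\sigma_r$. Your thin-SVD check, the linearity argument for isolating $\dbC$ (cleaner than the paper's inference from $\mathbf{\Phi}(\bC+\dbC)=\mathbf{y}+\delta\mathbf{y}$, which alone does not determine $\dbC$), and the remark on the index-reversing transposition fill in details the paper only asserts.
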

        This bound demonstrates the crucial necessity to regularize the regression.

\section{Regularization of m-tensor regression}
\label{sec:MT_regul}
    In this section we adapt regularization techniques for m-tensor regression. In under-determined regression with $m$ samples, the rank is at most $m$ and the condition number is an indicator on \emph{row correlation}. In the case of m-tensor regression, this means measuring \emph{how close to linearly dependent} the mappings $\{\bphi(\bx_k)\}_{k=1}^m$ are. In this section, we denote $P=\mathbf{\Phi}\mathbf{\Phi}^\intercal$.
    
    \subsection{Tikhonov regularization}
        Analogously to the matrix case, Tikhonov regularization for m-tensors is given as
        $$\hat{\mathbf{C}}_2=\mathbf{\Phi}^\intercal\left(P+\lambda^2\Id\right)^{-1}\mathbf{y}.$$
        This results in a uniform lifting of the eigenvalues of $P$, hence the conditioning improvement. The approximation of $f$ with Tikhonov regularization is then given by evaluating the inner product
        \begin{equation}
            \label{eq:l2appx}
            \hat{f}_2(\mathbf{x})=\langle\pmb{\varphi}(\mathbf{x})\mathbf{\Phi}^\intercal,\ \mathbf{z}_2\rangle\ \text{with}\ \mathbf{z}_2=\left(P+\lambda^2\Id\right)^{-1}\mathbf{y}.
        \end{equation}
        This approaches' complexity is dominated by the inversion scaling as $\mathcal{O}(m^3)$ and the construction of $P$ itself, scaling as $\mathcal{O}(m^2pd)$. Inference, given in \eqref{eq:l2appx}, has a complexity $\mathcal{O}(mpd)$.

    \subsection{Spectral truncation}
    
        We manage to always take advantage of the high efficiency of the inner product of rank-1 tensors. In the under-determined case, we make $P=\mathbf{\Phi}\mathbf{\Phi}^\intercal$ appear, where $[P]_{ij}=\langle\varphi(\mathbf{x}_i), \varphi(\mathbf{x}_j)\rangle.$ Thus, considering the mapping
        $$\varphi:\mathbf{x}\longmapsto\bigotimes_{i=1}^d \pmb{\psi}([\mathbf{x}]_i),$$
        we can derive a kernel function $k$ defined as the inner product
        $$k:(\mathbf{x},\ \mathbf{x}')\longmapsto\langle\pmb{\varphi}(\mathbf{x}),\ \pmb{\varphi}(\mathbf{x}')\rangle.$$
        Therefore we are guaranteed to build a symmetric positive semidefinite kernel. Assuming the 1D mappings $\bpsi$ to be continuous on compact domains (such as intervals of $\R$), then the kernel we define satisfies the assumptions \cite{mercer} under which Mercer-type spectral results apply. This allows us to leverage properties of kernels \cite{smolakernel}. Moreover it makes spectral truncation within the m-tensor framework a truncated kernel PCA \cite{kPCA}. Keeping only the first $r$ principal components in feature space assures us to preserve the optimal $r$-dimensional representation in feature space for a given mapping.

        Spectral truncation improves the conditioning of the regression operator by setting to zero the singular values below a set threshold $\tau$ or up to a prescribed rank $r$. We apply it to m-tensor regression through the eigenvalues of $P$. Let $U_r$ denote the matrix built from the first $r$ eigenvectors of $P$. The regularized coefficients are then given by
        $$\hat{\mathbf{C}}_*=\mathbf{\Phi}^\intercal U_r\Sigma_r^{-2}U_r^\intercal\mathbf{y}.$$
        The resulting approximation is computed as the inner product
        \begin{equation}
            \label{eq:nuclearappx}
            \hat{f}_*(\mathbf{x})=\langle\pmb{\varphi}(\mathbf{x})\mathbf{\Phi}^\intercal,\ \mathbf{z}_*\rangle,\ \text{with}\ \mathbf{z}_*=U_r\Sigma_r^{-2}U_r^\intercal\mathbf{y}.
        \end{equation}
        This approach results in a complexity dominated by the computation of the SVD of $P$, \emph{i.e.} $\mO(m^3pd)$. Inference cost is the same as for the Tikhonov regularization.

    \subsection{Subsampling}

        We consider a regularization technique that aims at both enhancing the conditioning of the system and improving inference's efficiency. We consider a subset of $\tilde{m}<m$ rows of $\mathbf{\Phi}$, corresponding to a subsample $\tilde{\mathcal{P}}\subset\mathcal{P}$ that can approximate all rows of $\mathbf{\Phi}$ up to a given tolerance $\varepsilon$, \emph{i.e.}
        \begin{equation}
            \label{eq:subsampling}
            \exists W\in\mathbb{R}^{m\times\tilde{m}},\quad\left\lVert\mathbf{\Phi}-W\tilde{\mathbf{\Phi}}\right\rVert_2^2<\varepsilon.
        \end{equation}
        Rows of $\mathbf{\Phi}$ are considered Almost Linearly Dependent (ALD) \cite{KRLS} on this subset. We denote $\tilde{\mathbf{\Phi}}$ the tensor made only of those $\tilde{m}$ rows of $\mathbf{\Phi}$. Those are selected based on the following squared distance $\delta_k$ evaluation. By denoting $[\mathbf{\Phi}]_k$ the $k$-th row of $\mathbf{\Phi}$,
        $$\forall k,\ 1\leq k\leq m,\quad\delta_k=\underset{\mathbf{w}\in\mathbb{R}^{\tilde{m}}}{\operatorname{min}}\left\lVert[\mathbf{\Phi}]_k-\mathbf{w}^\intercal\tilde{\mathbf{\Phi}}\right\rVert_2^2<\varepsilon.$$
        $\delta_k$ evaluates the distance of each row $[\mathbf{\Phi}]_k$ to the subspace spanned by the rows of $\tilde{\mathbf{\Phi}}$. Decomposition \cref{eq:subsampling} provides a bound with respect to the $\ell_2$ optimal provided by the previously introduced spectral truncation \cite{KRLS}. Sampling from the rows preserves the rank-1 structure, which is lost in spectral truncation. By separating the subset $\tilde{\mathbf{\Phi}}$ and the remainder that we denote $\bar{\mathbf{\Phi}}$ in \cref{eq:tensorreg} and applying the same separation and reordering in $\mathbf{y}$, we have
        $$\begin{bmatrix}
            \tilde{\mathbf{\Phi}}\\
            \bar{\mathbf{\Phi}}
        \end{bmatrix}\mathbf{C}=\begin{bmatrix}
            \tilde{\mathbf{y}}\\
            \bar{\mathbf{y}}
        \end{bmatrix}.$$
        We then only solve for the first subset, \emph{i.e.} we seek $\tilde{\mathbf{C}}$ in the minimization problem 
        \begin{equation*}
            \label{eq:ALIminimization}
            \tilde{\mathbf{C}}=\underset{\mathbf{C}}{\operatorname{argmin}}\left(\lVert \tilde{\mathbf{\Phi}}\mathbf{C}-\tilde{\mathbf{y}}\rVert_2^2\right).
        \end{equation*}
        The conditioning is enhanced by removing \emph{close-to-correlated} samples. It also truncates the spectrum so $\operatorname{rank}(\tilde{\bPhi})=\tilde{m}$. This subsystem approach is the only regularization that manages to reduce the size of the system to solve and provide a gain with respect to $m$ along with the linearization of inference complexity with $d$.

        The subset selection can be done in one of two ways: a greedy or streamed one. The former is developed analogously to what is done by Gelß et al. \cite{kFSA} while the latter is an adaptation of the kernel recursive least squares developed by Engel et al. \cite{KRLS} for signal processing. The approach proposed by Engel et al. is enhanced by incrementally building the Cholesky factor instead of relying on the kernel matrix, as inspired by Baddoo et al. \cite{LANDO}. The approximation is evaluated as
        $$\tilde{f}(\mathbf{x})=\langle\pmb{\varphi}(\mathbf{x})\tilde{\mathbf{\Phi}}^\intercal,\ \tilde{\mathbf{z}}\rangle,\quad\text{with}\quad\tilde{\mathbf{z}}=(\tilde{\mathbf{\Phi}}\tilde{\mathbf{\Phi}}^\intercal)^{-1}\tilde{\mathbf{y}}.$$
        The computational gain at the online stage can reveal massive compared to the least squares formulation if $\tilde{m}\ll m$. With the subsampling approach, inference cost scales as $\mathcal{O}(\tilde{m}pd)$. This formulation outperforms all previous ones at a small and controllable cost in terms of error. Algorithms for the greedy and streamed subsampling formulations are given in appendix \cref{sec:appendix:subsample}. The streamed construction results in a complexity bounded by $\mathcal{O}(m^2pd)$ by considering $\forall\ i,\ p_i=p$.

    \subsection{Summary}

        \Cref{tab:complexity_recall} summarizes the theoretical computational complexity of each regression approach at construction and inference stages. We have considered a number of samples scaling linearly with the dimension $d$ with a factor $\alpha$, so $m=\alpha d$.

        \begin{table}[ht!]
            \caption{Theoretical complexity evaluations for construction and inference for different regressions.\label{tab:complexity_recall}}
            \begin{center}
                \begin{tabular}{c|c|c|c}
                        & Least squares & Spect. Trunc. & Subsampling \\ \hline\hline
                    Construction & $\mathcal{O}(\alpha^3d^3p)$ & $\mathcal{O}(\alpha^3d^3p)$ & $\mathcal{O}(\alpha^2d^3p)$ (up. bound) \\ \hline
                    Inference & $\mathcal{O}(\alpha d^2p)$ & $\mathcal{O}(\alpha d^2p)$ & $\mathcal{O}(\tilde{m}dp)$
                \end{tabular}
            \end{center}
        \end{table}

\section{Numerical example}
\label{sec:numex}

    We demonstrate the scaling of our approach on the Rosenbrock function. It is defined for arbitrary dimension $d$ over the domain $\mathcal{D}=[-5, 10]^{\times d}$ as
    \begin{equation*}
        f(\bx)=\sum_{i=1}^{d-1}100\left([\bx]_{i+1}-[\bx]_i^2\right)^2+\left([\bx]_i-1\right)^2.
    \end{equation*}
    \paragraph{Proposed model}We approximate the Rosenbrock function with growing dimension $d\in\{10, 20, 30, 50, 100, 200\}$ with polynomial mappings of degree 4. Then we evaluate conditioning when augmenting the degree of approximation, \emph{i.e.} varying $q$ in the mapping defined as
    \begin{equation*}
        \pmb{\varphi}(\bx)=\bigotimes_{i=1}^d\begin{bmatrix}
            1 & c[\bx]_i & c[\bx]_i^2 & \dots & c[\bx]_i^q
        \end{bmatrix}.
    \end{equation*}
    The scaling factor $c$ is computed following the reasoning of \cref{sec:appendix:poly_stability}, here $c=(q10^{q})^{-1}$. The training set $\mathcal{P}$ of $m=\alpha d$ elements for $\alpha=150$, is drawn randomly from a Latin Hypercube Sample (LHS) of $5m$ elements, \emph{i.e.} we consider linear scaling of the sample with dimension so as to remain in the underdetermined regime. The m-tensor regression is compared to the polynomial kernel
    \begin{equation}
        k(\bx, \mathbf{y})=(1+\bx^\intercal\mathbf{y})^q.
    \end{equation}
    This kernel is sufficient to capture the function. We use it in vanilla kernel regression built as
    \begin{equation}
        \tilde{f}(\bx)=k(x,\mathcal{P})\mathbf{z},\quad\text{with}\quad\mathbf{z}=K(\mathcal{P}, \mathcal{P})^{-1}\mathbf{y}.
    \end{equation}
    The same kernel is used in the kernel Feature Space Approximation (kFSA) method \cite{kFSA} which builds greedily a subsample $\tilde{\mathcal{P}}$ based on a threshold distance in feature space just as our subsampling approach but with an implicit feature map. It is compared with the streamed subsampling approach. We do not include MANDy-like regression as is revealed too heavy on memory. We used the implementations of P. Gelß for MANDy and kFSA as references\footnote{GitHub account for the reference codes: \href{this repository}{https://github.com/PGelss}.}. The chosen kernel captures only the polynomial terms that are present in the Rosenbrock function whereas the mapping $\pmb{\varphi}$ captures more terms, therefore, we should expect slightly worse results for m-tensor regression based on this mapping.

    \paragraph{Error evaluation}The error is evaluated from a test set of $2m$ unseen points from the original LHS. The error is computed as the $\ell_2$ norm of errors over the testing set of points. Let $\mathbf{y}=\begin{bmatrix}
        y_1 & y_2 & \dots & y_{2m}
    \end{bmatrix}$ denote the vector of exact outputs for each input in the testing set. Its approximated counterpart is denoted $\tilde{\mathbf{y}}$, the error we observe is 
    $$E=\dfrac{\lVert \mathbf{y}-\tilde{\mathbf{y}}\rVert_2}{\lVert \mathbf{y}\rVert_2}.$$

    \paragraph{Results}\Cref{fig:ex:rosenbrock} displays the results of the benchmark we conducted in terms of building CPU time, inference time and error of approximation on an unseen test set. In \cref{tab:ex:rosenbrock} we give the thresholds selected for each approach and the corresponding number of retained samples (or rank for spectral truncation). Note in the case of spectral truncation that the truncation rank $r$ is selected as
    $$r=\underset{i}{\operatorname{argmin}}\left(1-\dfrac{\sum_{j=1}^i\sigma_j^2}{\sum_{j=1}^m\sigma_j^2}\right)\quad\text{st.}\quad1-\dfrac{\sum_{j=1}^i\sigma_j^2}{\sum_{j=1}^m\sigma_j^2}\leq\varepsilon.$$
    Thresholds for distances in feature spaces were selected to minimize the building error without keeping the full sample\footnote{We did not automate nor optimized the thresholds definitions, rather we fitted thresholds as powers of 10 and barely adjusted in the interest of time.}, otherwise least squares m-tensor regression and kernel regression would be more pertinent.

    \begin{figure}[ht!]
        \centering
        \begin{subfigure}[t]{.5\linewidth}
            \centerline{\includegraphics[width=\textwidth]{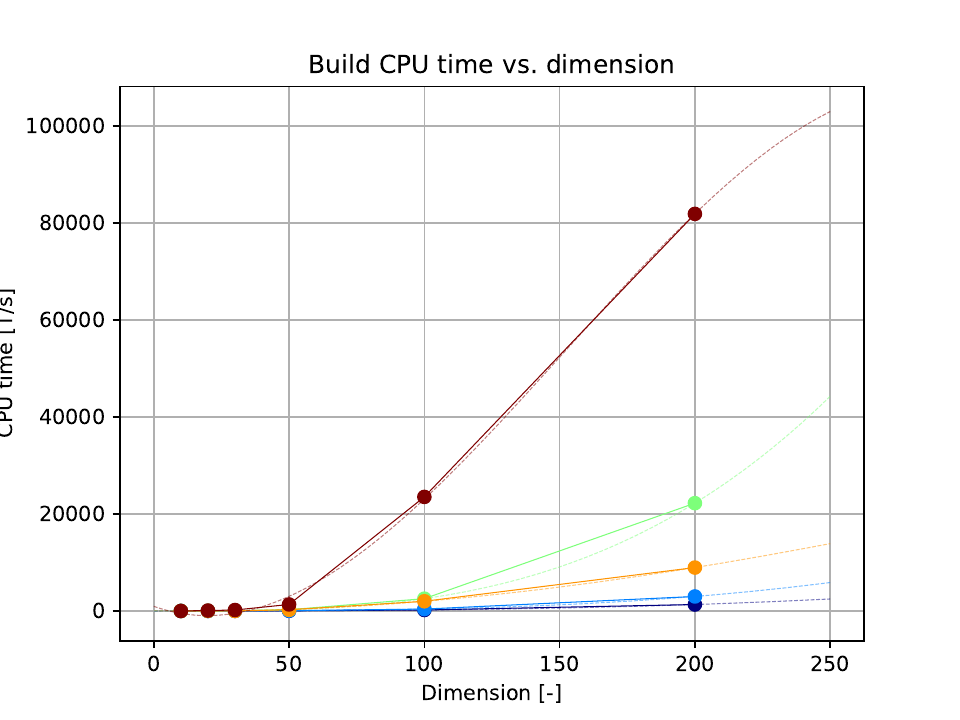}}
            \caption{Average CPU time to build.}
            \label{fig:ex:rosenbrock:build_time}
        \end{subfigure}%
        \begin{subfigure}[t]{.5\linewidth}
            \centerline{\includegraphics[width=\textwidth]{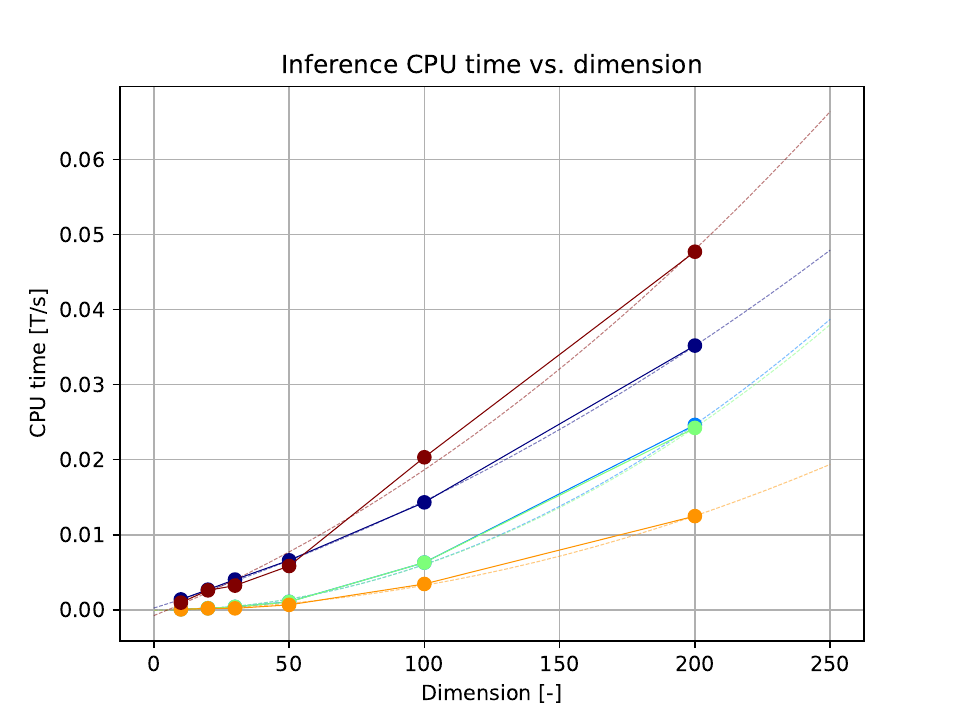}}
            \caption{Average CPU time to evaluate.}
            \label{fig:ex:rosenbrock:inference_time}
        \end{subfigure}
        \begin{subfigure}[t]{.5\linewidth}
            \centerline{\includegraphics[width=\textwidth]{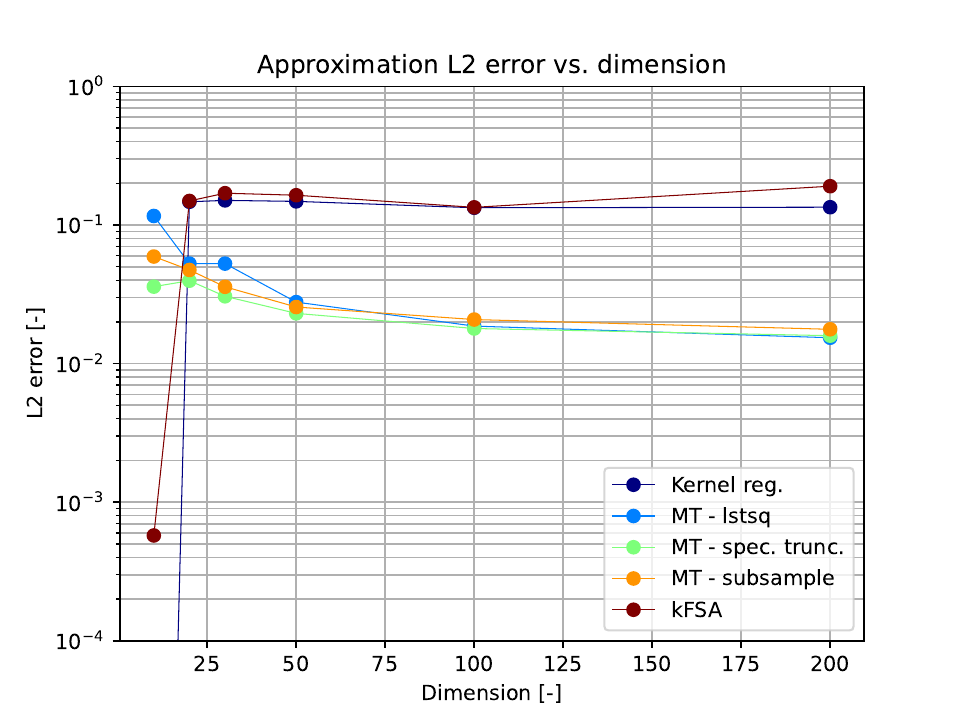}}
            \caption{Average model error on test set.}
            \label{fig:ex:rosenbrock:error}
        \end{subfigure}
        \caption{Benchmark of m-tensor regression with kernel regression and kFSA on the Rosenbrock function with varying dimension (dashed lines are cubic and quadratic fits for the building and inference times resp.).}
        \label{fig:ex:rosenbrock}
    \end{figure}
    \begin{table}[ht!]
    \caption{Hyperparameter values and average number of retained samples (or rank) for kFSA, subsampling and spectral truncation\label{tab:ex:rosenbrock}}
    \begin{center}
    \begin{tabular}{l|c|c|c|c|c|c}
        Dimension & 10 & 20 & 30 & 50 & 100 & 200 \\ \hline\hline
        $\tilde{m}$ kFSA & 985 & 2724 & 3371 & 6092 & 14927 & 19461 \\ \hline
        $\tilde{m}$ subsampl. & 340 & 2305 & 1296 & 4581 & 8896 & 14397 \\ \hline
        $r$ spec. trunc. & 260 & 1731 & 2195 & 4027 & 9359 & 19180 \\ \hline \hline
        Thresh. kFSA & $10^{4}$ & $10^{9}$ & $10^{11}$ & $10^{12}$ & $10^{13}$ & $5\cdot10^{14}$ \\ \hline
        Thresh. subsampl. & $10^{-6}$ & $10^{-6}$ & $10^{-4}$ & $10^{-3}$ & $2\cdot10^{-2}$ & $2.5\cdot10^{-1}$ \\ \hline
        Thresh. spec. trunc. & $10^{-7}$ & $10^{-7}$ & $10^{-5}$ & $10^{-4}$ & $10^{-3}$ & $10^{-2}$
    \end{tabular}
    \end{center}
    \end{table}
    We demonstrate that our approach outperforms other methods when the dimension grows dramatically. The approach of kFSA relies on a threshold that is harder to interpret since it acts on an implicit feature map, it is easy to settle for one that lowers the number of retained samples but error explodes rapidly (even for comparable amounts to our streamed subsampling approach). If many are kept, we end up confronted to the same result as kernel regression in high dimensions, with a higher overhead to compute. We plotted in dashed lines the best fit for each evolution with the degree given in \cref{tab:complexity_recall} to validate our complexity analysis.

    \begin{figure}[ht!]
        \centering
        \begin{subfigure}[t]{.5\linewidth}
            \centerline{\includegraphics[width=\textwidth]{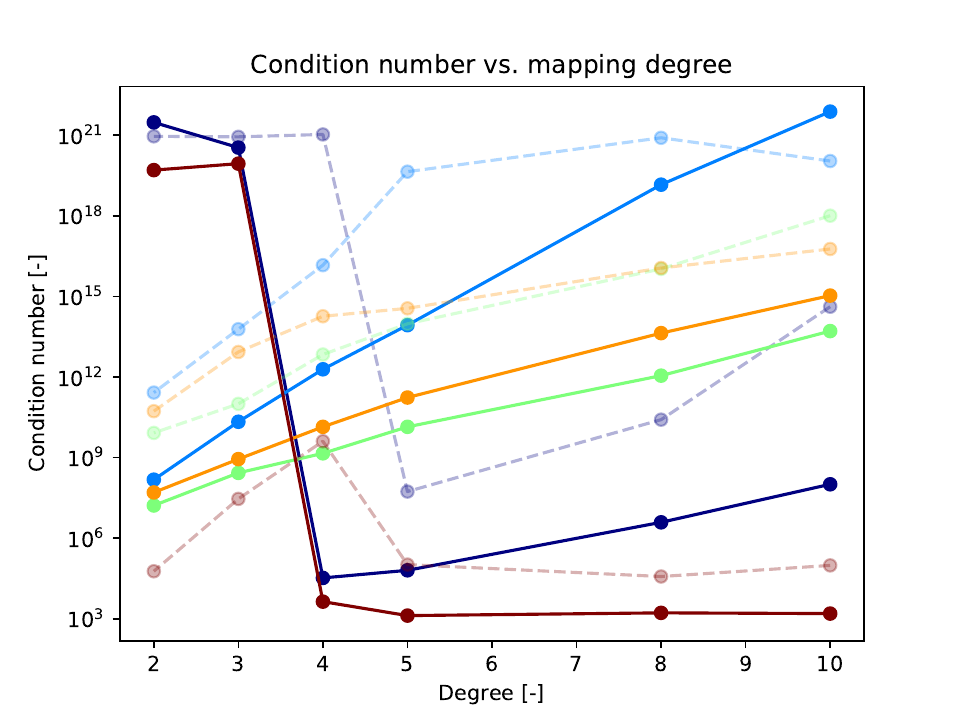}}
            \caption{Average condition number of Gram matrices.}
            \label{fig:ex:cond:kappa}
        \end{subfigure}%
        \begin{subfigure}[t]{.5\linewidth}
            \centerline{\includegraphics[width=\textwidth]{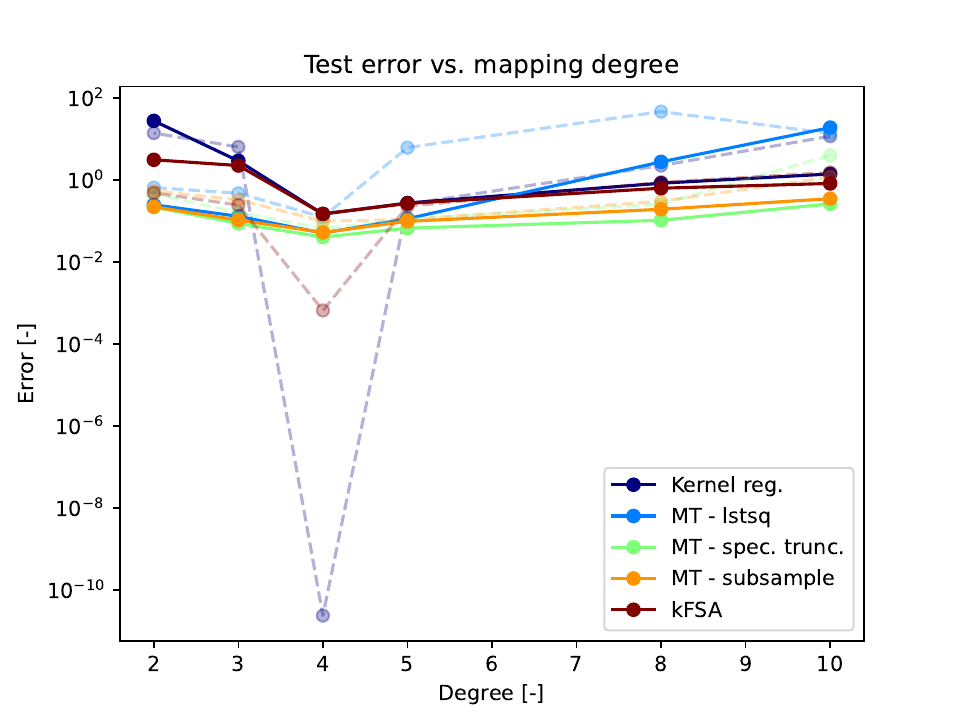}}
            \caption{Average error.}
            \label{fig:ex:cond:error}
        \end{subfigure}
        \caption{Evaluation of the conditioning of the various approaches with respect to the mapping's or kernel's degree (continuous line is $d=20$ and dashed $d=10$).}
        \label{fig:ex:cond}
    \end{figure}
    \Cref{fig:ex:cond:kappa} displays the average condition number of the final Gram matrices to be inverted in each method with varying order. \Cref{fig:ex:cond:error} shows the associated average error. The chosen parameters can be found with the code for the examples. Apart from low orders $q\in\{2,3\}$, m-tensor regression has a much worse conditioning than kernel regression but the proposed regularizations are effective and lower it by orders of magnitude. The jump in conditionning for kernel-based approaches is due to a switch between the oversampled and the undersampled regime based on the actual dimension of the feature space for the polynomial kernel: $\binom{d+q}{q}$.

    \paragraph{Convergence}We now take interest in the convergence of the model with the size of sample. To that end, we consider 5 models built on LHS of $m=\alpha d$ elements with $d=100$ and $\alpha\in\{50, 100, 150, 200, 250\}$. Each of those models is then tested on 2 unseen LHS of $m=\alpha d$ elements. We consider m-tensor regression only as the convergence of kernel regression is an already well-studied problem \cite{kernelconvergence}. It should be noted that in comparison with MANDy (function major), kFSA and kernel regression, m-tensor regression aims at building a very large number of features to regress with, as is done with coordinate major MANDy. The features produced by kernel approaches were perfectly fit for the cases we studied but they were not able to manage the growing dimension, either because of stability issues for kernel-based approaches or the larger memory cost for MANDy\footnote{We were limitted for using MANDy by working with a 32GB RAM, which caused no limitation with m-tensor regression.}. The amount of data needed to keep up with the interpolation limit grows too fast and the amount of features provided to compensate grows very slowly compared with the tensorized feature map. \Cref{fig:convergence:1} shows the difference in growth rate of feature spaces for a polynomial kernel and the tensorized feature map. This demonstrates how tensor regression leverages overfitting to adapt for high dimensional problems with low amounts of data, hence the need for regularization to capture few important features. Both spectral truncation and subsampling lower the effective feature space dimensionality.

    \begin{figure}[ht!]
        \centering
        \begin{subfigure}[t]{.5\linewidth}
            \centerline{\includegraphics[width=\textwidth]{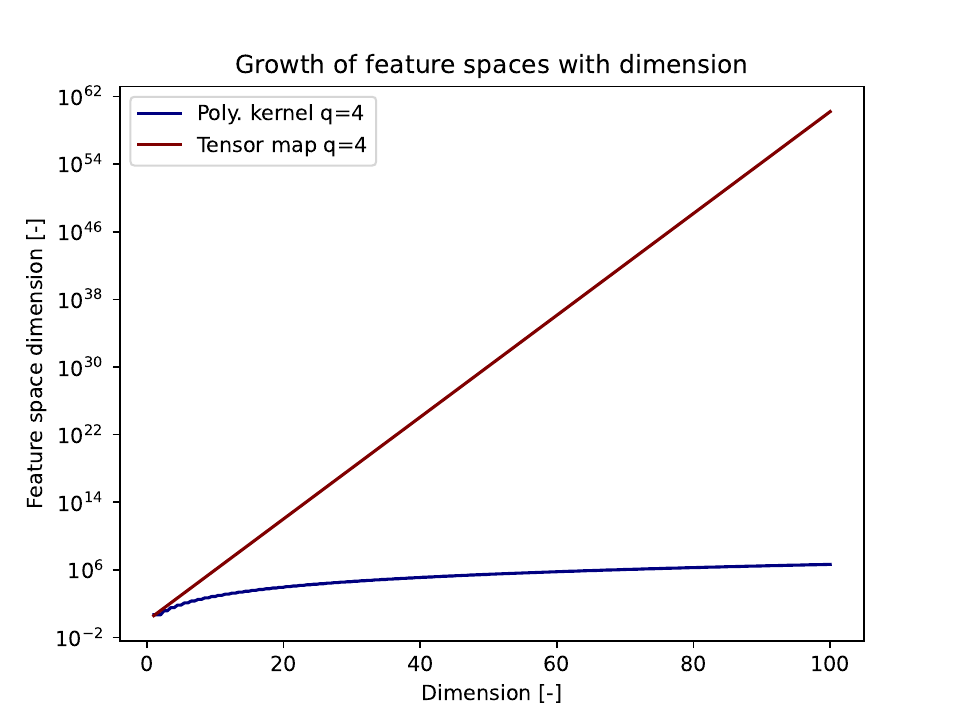}}
            \caption{Comparison of the growth of feature spaces for kernel polynomial regression and tensorized feature map.}
            \label{fig:convergence:1}
        \end{subfigure}%
        \begin{subfigure}[t]{.5\linewidth}
            \centerline{\includegraphics[width=\textwidth]{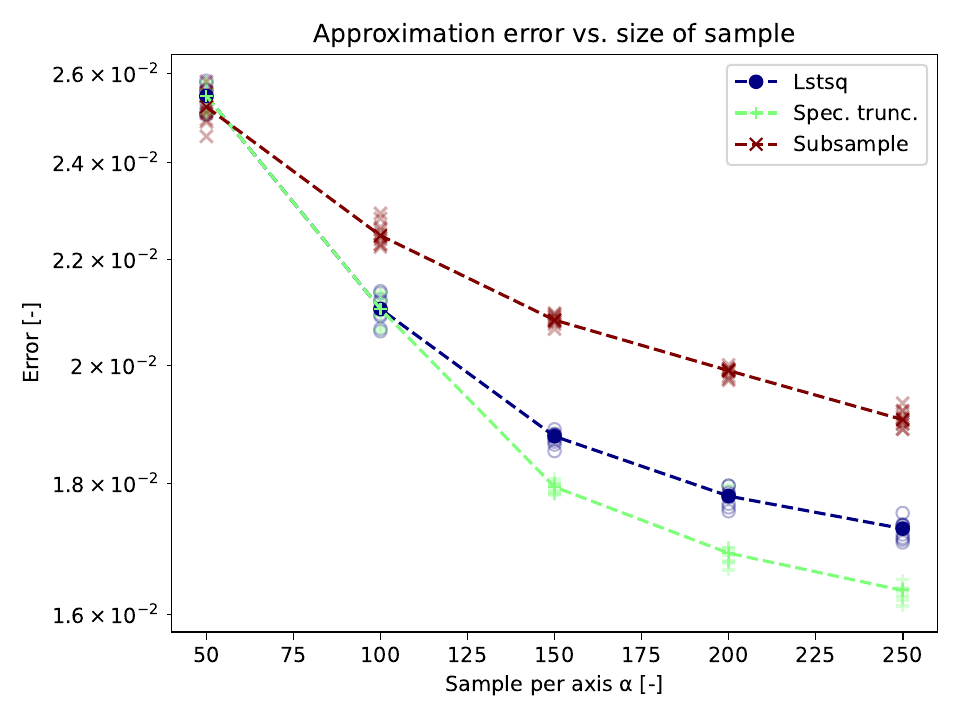}}
            \caption{Error of approximation on unseen data with $d=100$ for growing $\alpha$ (dashed line is the median).}
            \label{fig:convergence:2}
        \end{subfigure}
        \caption{Analysis of convergence of the model with the amount of data based on the tensorized feature map.}
        \label{fig:convergence}
    \end{figure}
    \Cref{fig:convergence:2} shows the error of the convergence study with growing $\alpha$. We can see the relatively slow convergence and the quality of regularization brought by spectral truncation. Subsampling remains around the same order of magnitude for approximately half the inference time. This demonstrates that even with the risk of overfitting, our approach effectively converges with good results at about 1\% of error in 100 dimension.

\section{Conclusion and perspectives}
\label{sec:conclu}

    In this paper we presented a tensor formalism for high dimensional regression of nonlinear models. The introduced m-tensor formalism is tailored for high dimensional regression in scarce data contexts. By leveraging tensor algebra, we provided properties from kernel methods with good scalability. The formalism smoothly generalizes well-known properties of matrix-based approaches to provide a tensor-structured least squares from a given mapping. We generalized different regularization techniques that aim at making models robust and generalizable from the low amount of data available. Those regularization techniques yield low errors while limiting the scaling when faced with a high number of parameters. The proposed subsampling approach allows for further gains in computational and memory costs.

    The m-tensor formalism was shown to circumvent the exponential complexity of high dimensionality with, in the worst case, cubic scaling on the number of parameters. Linear scaling of memory use with the dimension of the problem and a margin for parallel capability are inherited from the separated variables approach. We preserved an explicit feature map instead of an implicit one through a kernel function so as to maintain flexibility along each axis and scale mappings in a systematic way. The resulting non-iterative approach is especially performant in high dimensional contexts compared to state of the art. Its strength is to enrich the feature space exponentially with growing dimension.

    The proposed approach was applied on a general benchmark to demonstrate scaling and low relative errors. The behaviour with respect to the dimension, the order of approximation in a polynomial regression and amount of training input are analyzed.

    In the future, we will seek different developments, most probably stemming from randomized numerical linear algebra to further accelerate our tools and enhance their scaling. On the application side, the tool seems most adapted to model order reduction, data-driven operator inference, large-scale optimization or classification. The biggest challenge with such a large feature map is that of feature selection, \emph{i.e.} sparsification of the solution. An expansion of LASSO for $\ell_1$-norm minimization or probabilistic approaches may be of future interest.

\subsection*{Acknowledgements}We acknowledge fundings from the ENSAM RTE Chair, insightful discussions with Antonio Falcò and rich and fruitful exchanges with Jad Mounayer.
\printbibliography[heading=bibintoc,title={Bibliography}]

\newpage
\appendix
\section{Complements for stability analysis}
    \label{sec:appendix:stability}
    
    \subsection{Proofs}In this appendix, we provide proofs for the bounds provided in \cref{sec:numerical_stability} for the numerical stability of our approach.

        \begin{proof}(\Cref{th:bound_phixdx})
            From the assumed structure, we have
            \begin{equation*}
                \bphi(\bx)-\bphi(\bx+\dbx)=\bigotimes_{i=1}^d\bpsi_i([\bx]_i)-\bigotimes_{i=1}^d
            \bpsi_i([\bx]_i)+\pmb{\xi}_i([\bx]_i, [\dbx]_i).
            \end{equation*}
            Elements of $\pmb{\xi}_{i=1}^d$ are assumed weighted by infinitesimal perturbations, then by neglecting all (at least) doubly infinitesimal terms in the product we have
            \begin{equation*}
                \bphi(\bx)-\bphi(\bx+\dbx)\approx\sum_{k=1}^d\pmb{\xi}_k([\bx]_k, [\dbx]_k)\bigotimes_{\substack{i=1\\i\neq k}}^d\bpsi_i([\bx]_i).
            \end{equation*}
            This leads to the bound in Frobenius norm
            \begin{align*}
                \left\lVert\bphi(\bx+\dbx)-\bphi(\bx)\right\rVert_F&\approx\left\lVert\sum_{k=1}^d\pmb{\xi}_k([\bx]_k, [\dbx]_k)\bigotimes_{\substack{i=1\\i\neq k}}^d\bpsi_i([\bx]_i)\right\rVert_F,\\
                &\lesssim\left\lVert\sum_{k=1}^d\pmb{\xi}_k([\bx]_k, [\dbx]_k)\bigotimes_{i=1}^{d-1}\mathds{1}_p\right\rVert_F,\\
                &\lesssim\sum_{k=1}^d\sqrt{\sum_{i_1=1}^p\sum_{i_2=1}^p\dots\sum_{i_d=1}^p[\pmb{\xi}_k([\bx]_k, [\dbx]_k)]_{i_1}^2},\\
                &\lesssim\sum_{k=1}^d\sqrt{\sum_{i_1=1}^p[\pmb{\xi}_k([\bx]_k, [\dbx]_k)]_{i_1}^2p^{d-1}},\\
                &\lesssim\sqrt{p^{d-1}}\sum_{k=1}^d\lVert\pmb{\xi}_k([\bx]_k, [\dbx]_k)\rVert_F.
            \end{align*}
            We assumed a worst \emph{normalized} case for the bound on $\pmb{\psi}_{i=1}^d$ so as to simplify our bound for interpretation.
        \end{proof}
        
        \begin{proof}(\Cref{lem:inputroundoff})
            The lemma stems from the bound of \cref{th:bound_phixdx} and assumes a constant infinitesimal perturbation $[\dbx]_{i=1}^d=\mO(\epsilon_\text{machine})$. Then we assumed the element-wise bound $[\pmb{\xi}_{i=1}^d([\bx]_i,[\dbx]_i)]_{j=1}^p\leq c\epsilon_\text{machine}$ which, plugged into the bound of \cref{th:bound_phixdx} leads to 
            \begin{equation*}
                \left\lVert\bphi(\bx+\dbx)-\bphi(\bx)\right\rVert_F\lesssim\sqrt{p^{d-1}}\sum_{k=1}^d\sqrt{\sum_{i=1}^pc^2\epsilon_\text{machine}^2}=dc\epsilon_\text{machine}\sqrt{p^d}.
            \end{equation*}
            The bounding constant $c_{\epsilon}$ is dependent on the choice of mapping and its domain of definition.
        \end{proof}

        \begin{proof}(\Cref{theorem:backward_stability})
            We assumed normalized mappings $\pmb{\psi}_{i=1}^d$ with separable perturbation-induced terms $\pmb{\xi}_{i=1}^d$ and settled that there exists a mapping-dependent constant $c$ such that $\lVert\pmb{\xi}_i(x,\delta x)\rVert_F\leq c\sqrt{p}$, which merely assumes $\pmb{\psi}_{i=1}^d$ remains bounded over its domain of definition. Then by denoting
            \begin{equation*}
                f(\bx)=\bphi(\bx)\bC,\quad\tilde{f}(\bx)=\bphi(\bx)(\bC+\dbC)\quad\text{and}\quad\delta\pmb{\phi}(\bx,\dbx)=\bphi(\bx)-\bphi(\bx+\dbx),
            \end{equation*}
            the backward error is given from its definition \cite{numericalAnalysis} as
            \begin{align*}
                \dfrac{\lVert\tilde{f}(\bx)-f(\bx+\dbx)\rVert}{\lVert f(\bx+\dbx)\rVert}&=\dfrac{\lVert\bphi(\bx)\bC+\bphi(\bx)\dbC-\bphi(\bx+\dbx)\bC\rVert}{\lVert\bphi(\bx+\dbx)\bC\rVert}\\
                &=\dfrac{\lVert \bphi(\bx)\bC+\bphi(\bx)\dbC-(\bphi(\bx)+\dbphi(\bx,\dbx))\bC\rVert}{\lVert(\bphi(\bx)+\dbphi(\bx,\dbx))\bC\rVert}\\
                &=\dfrac{\lVert \bphi(\bx)\dbC-\dbphi(\bx,\dbx)\bC\rVert}{\lVert(\bphi(\bx)+\dbphi(\bx,\dbx))\bC\rVert}\\
                &\leq\dfrac{\lVert\dbC\rVert}{\lVert\bC\rVert}\dfrac{\lVert\bphi(\bx)\rVert}{\lVert\bphi(\bx)+\dbphi(\bx,\dbx)\rVert}+\dfrac{\lVert\dbphi(\bx,\dbx)\rVert}{\lVert\bphi(\bx)+\dbphi(\bx,\dbx)\rVert}
            \end{align*}
            We note that while $\dbphi(\bx,\dbx)\ll\bphi(\bx)$, the criterion may be approximated as
            \begin{equation}
                \label{eq:intermediate_eq_bwd_stb}
                \dfrac{\lVert\dbC\rVert}{\lVert\bC\rVert}\dfrac{\lVert\bphi(\bx)\rVert}{\lVert\bphi(\bx)+\dbphi(\bx,\dbx)\rVert}+\dfrac{\lVert\dbphi(\bx,\dbx)\rVert}{\lVert\bphi(\bx)+\dbphi(\bx,\dbx)\rVert}\approx\dfrac{\lVert\dbC\rVert}{\lVert\bC\rVert}.
            \end{equation}
            From the bound provided in \cref{th:bound_phixdx} and considering the bounding constant $c$ introduced earlier, we have
            \begin{equation*}
                \lVert\dbphi(\bx,\dbx)\rVert\lesssim dc\sqrt{p^d}.
            \end{equation*}
            Assuming the \emph{normalized} bound $\pmb{\psi}_{i=1}^d\leq\mathds{1}_p$, \emph{i.e.} $\lVert\pmb{\varphi}(\bx)\rVert\leq\sqrt{p^d}$, the left-hand-side of \cref{eq:intermediate_eq_bwd_stb} is bounded as
            \begin{equation*}
                \dfrac{\lVert\dbC\rVert}{\lVert\bC\rVert}\dfrac{\lVert\bphi(\bx)\rVert}{\lVert\bphi(\bx)+\dbphi(\bx,\dbx)\rVert}+\dfrac{\lVert\dbphi(\bx,\dbx)\rVert}{\lVert\bphi(\bx)+\dbphi(\bx,\dbx)\rVert}\leq\dfrac{\lVert\dbC\rVert}{\lVert\bC\rVert}\dfrac{1}{1+dc}+\dfrac{dc}{1+dc}.
            \end{equation*}
            Thus, guaranteeing $dc\ll1$ lets the backward stability criterion tend only to the relative model error in $\bC$.
        \end{proof}

        \begin{proof}(\cref{th:bound_noise})
            We assume the sample $\mP$ to be sufficient and $\bphi$ to be well suited to approximate $f$, therefore 
            $$f(\bx)=\bphi(\bx)\bC=\bphi(\bx)\bPhi^\intercal(\bPhi\bPhi^\intercal)^{-1}\mathbf{y}.$$
            Then, the model error $\dbC$ with the introduction of noise behaves as
            \begin{equation*}
                \bPhi(\bC+\dbC)=\mathbf{y}+\delta\mathbf{y}.
            \end{equation*}
            So, following the first assumption, the model computation leads to
            \begin{equation*}
                \dbC=\mathbf{\Phi}^\intercal\left(\mathbf{\Phi}\mathbf{\Phi}^\intercal\right)^{-1}\delta\mathbf{y}.
            \end{equation*}
            Hence the bound
            \begin{equation*}
                \lVert\dbC\rVert\leq\dfrac{\lVert\delta\mathbf{y}\rVert}{\sigma_r},
            \end{equation*}
            since we leverage the mode-1 unfolding's pseudoinverse, whose norm is the reciprocal of the lowest nonzero singular value $\sigma_r$.
        \end{proof}

    \subsection{Stability for given mappings}
        We state in the paper that the choice of mapping implies the study of input perturbation. We provide that analysis for the generic canonical polynomial mapping and a trigonometric polynomial mapping.

        \subsubsection{Polynomial mapping}
            \label{sec:appendix:poly_stability}
            We consider perturbation $\delta x$ in the input to assess stability as defined in the case of the canonical 1D degree $q$ polynomial basis, so we consider
            \begin{equation}
                \label{eq:poly_basis}
                \bpsi(x+\delta x)=\begin{bmatrix}
                1 & x+\delta x & (x+\delta x)^2 & \dots & (x+\delta x)^q
                \end{bmatrix}.
            \end{equation}
            We immediately see how the choice of basis and normalization will influence stability. The $d$-dimensional tensorization of such bases will lead to the evaluation 
            \begin{equation}
                \label{eq:poly_sep_perturb}
                \pmb{\varphi}(\mathbf{x}+\delta\mathbf{x})=\bigotimes_{i=1}^d\bpsi([\bx]_i+[\dbx]_i).
            \end{equation}
            We manage to identify and separate the term due to input perturbation along each dimension by developping the basis \cref{eq:poly_basis} as 
            \begin{equation*}
                \bpsi(x+\delta x)=\begin{bmatrix}
                1 & x+\delta x & x^2+2x\delta x+\delta x^2 & \dots & \sum_{k=0}^q\binom{q}{k}x^k\delta x^{q-k}
                \end{bmatrix}.
            \end{equation*}
            In a first order approximation, we may separate that basis as $\bpsi(x+\delta x)\approx\bpsi(x)+\pmb{\xi}(x,\delta x)$ where
            \begin{equation}
                \label{eq:xi_poly}
                \pmb{\xi}(x,\delta x)=\begin{bmatrix}
                0 & \delta x & 2x\delta x & \dots & qx^{q-1}\delta x
                \end{bmatrix}.
            \end{equation}
            Therefore the multidimensional basis $\bphi$ inherits the partition
            \begin{equation}
                \label{eq:separated_xi}
                \bphi(\bx+\dbx)\approx\bphi(\bx)+\dbphi(\bx,\dbx)=\bphi(\bx)+\sum_{i=1}^d\pmb{\xi}([\bx]_i,[\dbx]_i)\bigotimes_{\substack{j=1 \\ j\neq i}}^d\bpsi([\bx]_j).
            \end{equation}
            We stated in \cref{theorem:backward_stability} that we need to identify a bounding constant $c$ from $\pmb{\xi}_{i=1}^d$ to assess the scaling to apply to remain stable. Let each $[\bx]_{i=1}^d$ be defined over an interval $[a_i,b_i]_{i=1}^d\subset\mathbb{R}$, then we can consider 
            $$\lVert\pmb{\xi}_i([\bx]_i,[\dbx]_i)\rVert_F<q\operatorname{max}(\lvert a_i\rvert,\lvert b_i\rvert)^{q-1}[\dbx]_i\sqrt{p}$$
            with quite a margin. Therefore, we may uniformly scale down $\bpsi_{i=1}^d$ with a factor $\mO(1/q\operatorname{max}(\lvert a_i\rvert,\lvert b_i\rvert)^{q-1})$. Note that for this mapping to be normalized element-wise, it is required that 
            $$\operatorname{max}(\lvert a_i\rvert,\lvert b_i\rvert)\leq1.$$
            The scale-down approach will not affect conditioning of the regression (which can be tackled with any proposed regularization technique) and guarantees stability as $dc\ll1$ seeing \cref{theorem:backward_stability}.

        \subsubsection{Trigonometric mapping}
            \label{sec:appendix:trigo_stability}
            For trigonometric approximation, we consider the 1D mapping
            \begin{equation}
                \label{eq:trigo_basis}
                \bpsi(x+\delta x)=\begin{bmatrix}
                1 & \operatorname{sin}(x+\delta x) & \operatorname{sin}^2(x+\delta x) & \dots & \operatorname{sin}^q(x+\delta x)
                \end{bmatrix}.
            \end{equation}
            This can be analogously developed for a basis in $\operatorname{cos}^{1\leq i\leq q}(x+\delta x)$. Those are naturally normalized element-wise. We start by noting that 
            \begin{equation*}
                \operatorname{sin}(x+\delta x)=\operatorname{sin}(x)-\operatorname{cos}(x)\operatorname{tan}(\delta x)\ \text{and}\ \operatorname{cos}(x+\delta x)=\operatorname{cos}(x)-\operatorname{sin}(x)\operatorname{tan}(\delta x).
            \end{equation*}
            Then by applying the polynomial case we can identify the structure of \cref{eq:separated_xi} with, for the polynomial sine basis,
            \begin{equation*}
                \label{eq:xi_trigo}
                \pmb{\xi}(x, \delta x)=\begin{bmatrix}
                    0 & -\operatorname{cos}(x)\operatorname{tan}(\delta x) & \dots & -q\operatorname{sin}^{q-1}(x)\operatorname{cos}(x)\operatorname{tan}(\delta x)
                \end{bmatrix}.
            \end{equation*}
            Chosing a loose bound from an element-wise worst case, we can provide
            $$\lVert\pmb{\xi}(x, \delta x)\rVert_F< \sqrt{q}\left(1-\dfrac{1}{q}\right)^{\frac{q-1}{2}}\operatorname{tan}(\delta x)\sqrt{p},$$
            so that scaling uniformly with a factor $\mO(q^{-1/2}(1-q^{-1})^{(1-q)/2})$ each 1D mapping will preserve the stability criterion of \cref{theorem:backward_stability}. This value comes from considering $[\pmb{\xi}_{i=1}^d(x,\delta x)]_{j=1}^p=\operatorname{max}_x\left(\lvert-q\operatorname{sin}^{q-1}(x)\operatorname{cos}(x)\operatorname{tan}(\delta x)\rvert\right)$ as a function of $q$.

\section{Algorithms to subsample the operator}
\label{sec:appendix:subsample}
    In this section we provide algorithms to compute subsampled m-tensors. We provide two ways of constructing it. The first is the streamed way, inspired from online learning of streamed data. We qualify the second as the greedy decomposition. We first review the streamed way and then provide arguments for the more costful greedy decomposition. We note that review of randomized numerical linear algebra and the work of Mahoney and Drineas on CUR matrix decomposition \cite{CUR-data} lead us to seek a \emph{near optimal} greedy algorithm leveraging randomized approaches with a low computational cost and preserving the efficiency of the streamed approach instead of the current greedy approach, closer to the pseudoskeleton approach of Goreinov et al. \cite{skeleton-CUR}. We settle notations for this section: we consider an m-tensor $\mathbf{T}$ with c-dim $m$ and provide its decomposition based on a subsample with a tolerance $\varepsilon$ as 
    $$\lVert\mathbf{T}-W\tilde{\mathbf{T}}\rVert^2\leq\varepsilon.$$
    In the remainder of the section, we give a geometric interpretation to the incremental construction of the subset of rows from a given m-tensor. We consider each row to be a point in a high dimensional space. Providing the distance definition from the inner product of two of those points.

    \subsection{Streamed decomposition}

        The streamed construction of the subset considers that we sweep once through the rows of an m-tensor $\mathbf{T}$ or treat rows as they are streamed to select the subset on-the-fly.

        \subsubsection*{Elements $1$ and $2$}
        The first provided row $[\mathbf{T}]_{1}$ is used by default to initialize $\tilde{\mathbf{T}}$. Weights $W_1$ are initialized as an array containing only 1. We define 
        $$L_1=\lVert[\mathbf{T}]_1\rVert.$$
        From that given point, we build a subspace $\operatorname{span}(\{[\mathbf{T}]_{1}\})$. For the next point, we evaluate the distance to that subspace. For a new row $[\mathbf{T}]_2$, we compute $\delta_2$ defined as
        \begin{equation}
            \label{eq:delta2}
            \delta_2=\underset{w\in\mathbb{R}}{\operatorname{min}}\left\lVert[\mathbf{T}]_2-w[\mathbf{T}]_1\right\rVert^2,
        \end{equation}
        where the minimizing $w$ is obtained by expanding the norm in the definition of $\delta$ and deriving the minimum of a quadratic polynomial in $w$, yielding
        $$w=\dfrac{\langle[\mathbf{T}]_1,\ [\mathbf{T}]_2\rangle}{\lVert[\mathbf{T}]_1\rVert^2}.$$
        Using that minimizer, $\delta$ will be given by
        $$\delta_2=\lVert[\mathbf{T}]_2\rVert^2-\left(\dfrac{\langle[\mathbf{T}]_2,\ [\mathbf{T}]_1\rangle}{\lVert[\mathbf{T}]_1\rVert^2}\right)^2.$$
        By comparing $\delta_2$ to the precision criterion $\varepsilon$, the point is kept or not. If it is kept, $[\mathbf{T}]_2$ is appended to $\tilde{\mathbf{T}}$ and $L$ is updated as
        $$L_2=\begin{bmatrix}
            L_{1} & 0\\
            s & \sqrt{\lVert[\mathbf{T}]_{2}\rVert^2-s^2}
        \end{bmatrix},\quad\text{with}\quad s=\dfrac{\langle[\mathbf{T}]_1,\ [\mathbf{T}]_2\rangle}{L_1}.$$
        The weight matrix $W$ is updated as
        $$W_2=\begin{bmatrix}
            W_1 & 0\\
            0 & 1
        \end{bmatrix}\quad\text{if }\delta_2\geq\varepsilon,\quad W_2=\begin{bmatrix}
            W_1 \\
            w
        \end{bmatrix}\quad\text{otherwise}.$$
        
        \subsubsection*{Element $k$}
        Consider we retained $\tilde{k}$ elements in $\tilde{\mathbf{T}}$ until that point. This means we need to evaluate the distance of $[\mathbf{T}]_k$ to a $\tilde{k}$-dimensional subspace. We define the vector $\mathbf{b}$ from the already selected rows and the new one as
        $$\mathbf{b}=\tilde{\mathbf{T}}\ltimes[\mathbf{T}]_k^\intercal.$$
        The squared distance $\delta_k$ between $[\mathbf{T}]_k$ and $\operatorname{span}(\tilde{\mathbf{T}})$ is given as
        \begin{equation}
            \label{eq:delta_k_streamed}
            \delta_k=\underset{\mathbf{w}}{\operatorname{min}}\left\lVert[\mathbf{T}]_k-\mathbf{w}^\intercal\tilde{\mathbf{T}}\right\rVert^2=\lVert[\mathbf{T}]_k\rVert^2-\mathbf{b}^\intercal\tilde{P}^{-1}\mathbf{b},
        \end{equation}
        where $\tilde{P}=\tilde{\mathbf{T}}\ltimes\tilde{\mathbf{T}}^\intercal$ admits a Cholesky decomposition with factor $L_{\tilde{k}}$. Therefore by computing the vector $\mathbf{s}=L_{\tilde{k}}^{-1}\mathbf{b}$ by downward backsubstitution, the squared distance $\delta_k$ can be efficiently computed from $\mathbf{b}^\intercal\tilde{P}^{-1}\mathbf{b}=\mathbf{s}^\intercal\mathbf{s}=\lVert\mathbf{s}\rVert^2$. If $\delta_k>\varepsilon$, $[\mathbf{T}]_k$ is appended to $\tilde{\mathbf{T}}$. The Cholesky factor is updated as
        $$L_{\tilde{k}+1}=\begin{bmatrix}
            L_{\tilde{k}} & 0\\
            \mathbf{s}^\intercal & \sqrt{\lVert[\mathbf{T}]_{k}\rVert^2-\lVert\mathbf{s}\rVert^2}
        \end{bmatrix}.$$
        The weight matrix $W$ is updated as
        $$W_{k+1}=\begin{bmatrix}
            W_k & 0\\
            0 & 1
        \end{bmatrix}\quad\text{if }\delta_k\geq\varepsilon,\quad W_{k+1}=\begin{bmatrix}
            W_k \\
            \mathbf{w}^\intercal
        \end{bmatrix}\quad\text{otherwise},$$
        with $\mathbf{w}=(L_{\tilde{k}}^\intercal)^{-1}\mathbf{s}$ computed by upward backsubstitution. This construction loops only once on the rows of $\mathbf{T}$ and leverages the incrementally constructed Cholesky factors $L_{\tilde{k}}$. It is fit for streamed data if the criterion on squared distance $\varepsilon$ can be chosen a priori. \Cref{alg:streamedALI} sums up the approach.

        \begin{algorithm}[hbt!]
            \caption{Streamed subsample construction}\label{alg:streamedALI}
            \begin{algorithmic}
                \Require{$\mathbf{T}$, $\varepsilon$}
                \Ensure{$\tilde{\mathbf{T}}$, $W$}
                \State $\tilde{\mathbf{T}}\gets[\mathbf{T}]_1$\hfill\Comment Initialize reduced m-tensor
                \State $L_{1}\gets\lVert[\mathbf{T}]_1\rVert$\hfill\Comment Initialize Cholesky factor
                \State $\tilde{k}\gets1$\hfill\Comment Initialize row index in reduced m-tensor
                \State $W\gets[1]$\hfill\Comment Initialize weights
                \For{any new $[\mathbf{T}]_k,\ k>1$}\hfill\Comment Loop on rows of initial $\mathbf{T}$ or streamed data
                \State $\mathbf{b}\gets\tilde{\mathbf{T}}\ltimes[\mathbf{T}]_k^\intercal$\hfill\Comment Compute $\mathbf{b}$
                \State $\mathbf{s}\gets L_{\tilde{k}}^{-1}\mathbf{b}$\hfill\Comment Compute $\mathbf{s}$ by backsubstitution
                \State $\delta_k\gets\lVert[\mathbf{T}]_k\rVert^2-\mathbf{s}^\intercal\mathbf{s}$\hfill\Comment Evaluate $\delta_k$
                \If{$\delta_k>\varepsilon$}\hfill\Comment Compare distance to threshold
                \State $\tilde{\mathbf{T}}\gets\begin{bmatrix}
                    \tilde{\mathbf{T}} & [\mathbf{T}]_k
                \end{bmatrix}^\intercal$\hfill\Comment Update reduced m-tensor $\tilde{\mathbf{T}}$
                \State $\tilde{k}\gets\tilde{k}+1$\hfill\Comment Update number of kept tensors $\tilde{k}$
                \State $L_{\tilde{k}}\gets\begin{bmatrix}
                    L_{\tilde{k}-1}&0\\
                    \mathbf{s}^\intercal&\sqrt{\lVert[\mathbf{T}]_k\rVert^2-\mathbf{s}^\intercal\mathbf{s}}
                \end{bmatrix}$\hfill\Comment Update Cholesky factor $L$
                \State $W\gets\begin{bmatrix}
                    W&0\\
                    0&1
                \end{bmatrix}$\hfill\Comment Update weights $W$
                \Else
                \State $W\gets\begin{bmatrix}
                    W & (L_{\tilde{k}}^\intercal)^{-1}\mathbf{s}
                \end{bmatrix}^\intercal$\hfill\Comment Update weights $W$
                \EndIf
                \EndFor
            \State\Return $\tilde{\mathbf{T}}$, $W$ 
            \end{algorithmic}
        \end{algorithm}

    \subsection{Greedy decomposition}

        From the same notations, we propose an iterative approach during which we add the best choice among all rows of $\mathbf{T}$ at each step, \emph{i.e.} the greedy version.

        \subsubsection*{Element $1$}
        The first row of $\tilde{\mathbf{T}}$ is selected so that its span minimizes the sum of distances with the other rows. To evaluate the distance of all points to the subspaces spanned by each points, we build the matrix $\Delta_1\in\mathbb{R}^{m\times m}$ as
        \begin{equation}
            \label{eq:Delta1}
            [\Delta_1]_{ij}=\delta_{ij}^1\quad,\quad\delta_{ij}^1=\underset{w\in\mathbb{R}}{\operatorname{min}}\left\lVert[\mathbf{T}]_i-w[\mathbf{T}]_j\right\rVert^2.
        \end{equation}
        Each element is defined analogously to the squared distance \cref{eq:delta2}. We compute the contraction of each column vector of $\Delta_1$ and select the first row of $\tilde{\mathbf{T}}$ so as to minimize that sum. If all distances in the column vector corresponding to the selected point were below the precision tolerance $\varepsilon$, then $\tilde{\mathbf{T}}$ would be sufficient and the algorithm would end there. Note that \emph{i}) the elements of $\Delta_1$ can be built from the elements of $P=\mathbf{T}\ltimes\mathbf{T}^\intercal$ and \emph{ii}) $\Delta_1$ is not symmetric.

        \subsubsection*{Element $k$}
        The $k$-th row of $\tilde{\mathbf{T}}$, $k>1$ is selected so that the subspace
        $$\operatorname{span}\left(\left\{[\tilde{\mathbf{T}}]_i,\ 1\geq i\geq k\right\}\right)$$
        minimizes the distances to all remaining points. In order to evaluate these distances, we denote $\bar{\mathbf{T}}_k$ the m-tensor $\mathbf{T}$ with rows previously selected in $\tilde{\mathbf{T}}$ masked. We define 
        $$\forall\ j,\ 1\leq j\leq m-k,\quad\tilde{\mathbf{T}}_j=\begin{bmatrix}
            \tilde{\mathbf{T}} & [\bar{\mathbf{T}}_k]_j
        \end{bmatrix}^\intercal$$
        as each possible augmentation of the subsample. At step $k$ we define the matrix $\Delta_k\in\mathbb{R}^{(m-k)\times(m-k)}$ element-wise as
        \begin{equation}
            \label{eq:Delta_k_greedy}
            \left[\Delta_k\right]_{ij}=\underset{\mathbf{w}}{\operatorname{min}}\lVert[\bar{\mathbf{T}}_i]-\mathbf{w}^\intercal\tilde{\mathbf{T}}_j\rVert^2.
        \end{equation}
        Each element is constructed analogously to \cref{eq:delta_k_streamed}. We contract the columns of $\Delta_k$ to obtain a vector $\pmb{\delta}_k$ whose elements are sums of squared distances. We select the index of the best choice in the rows of $\bar{\mathbf{T}}_k$ as the minimizer
        $$\hat{\imath}=\operatorname{argmin}_i([\pmb{\delta}_k]_i).$$
        The corresponding row $[\bar{\mathbf{T}}_k]_{\hat{\imath}}$ is appended to $\tilde{\mathbf{T}}$ and masked for following iterations. If all distances in the $\hat{\imath}$-th column of $\Delta_k$ are below $\varepsilon$, the precision criterion is reached and the algorithm is stopped. Otherwise, operators are updated and the loop continues.
    
        Once the optimal $\tilde{\mathbf{T}}$ of size $\tilde{m}$ is computed, we evaluate the weight matrix using the projector onto the span of $\tilde{\mathbf{T}}$ denoted by $\Pi_{\tilde{\mathbf{T}}}$ and defined as
        $$\Pi_{\tilde{\mathbf{T}}}=\tilde{\mathbf{T}}^\intercal\left(\tilde{\mathbf{T}}\ltimes\tilde{\mathbf{T}}^\intercal\right)^{-1}\tilde{\mathbf{T}}.$$
        We identify the weights in the projection of $\mathbf{T}$
        \begin{equation}
            \label{eq:greedy_coefs}
            \mathbf{T}\ltimes\Pi_{\tilde{\mathbf{T}}}=W\tilde{\mathbf{T}}.
        \end{equation}
        We provide \cref{alg:greedyALI} to build the greedy subsample from a full m-tensor.

        \begin{algorithm}[hbt!]
            \caption{Greedy subsample construction}
            \label{alg:greedyALI}
            \begin{algorithmic}
                \Require{$\mathbf{T}$, $\varepsilon$}
                \Ensure{$\tilde{\mathbf{T}}$, $W$}
                \State Build matrix $\Delta_1$ from \cref{eq:Delta1}
                \State $[\pmb{\delta}_1]_i\gets\sum_{j=1}^m[\Delta_1]_{ji}$\hfill\Comment Sum columns of $\Delta_1$
                \State $\tilde{\mathbf{T}}\gets[\mathbf{T}]_{\hat{\imath}},\ \hat{\imath}=\operatorname{argmin}_i([\pmb{\delta}_1]_i)$\hfill\Comment Select first element of $\tilde{\mathbf{T}}$
                \State $L_1\gets[\lVert\tilde{\mathbf{T}}\rVert]$\hfill\Comment Initialize Cholesky factor
                \State Define $\bar{\mathbf{T}}_1$ by masking $\hat{\imath}$-th row of $\mathbf{T}$
                \For{$k,\ 1< k\leq m$}\Comment Loop on rows of $\mathbf{T}$
                    \State Build matrix $\Delta_k$ from \cref{eq:Delta_k_greedy}
                    \State $[\pmb{\delta}_k]_i\gets\sum_{j=1}^{m-k}[\Delta_k]_{ji}$\hfill\Comment Sum columns of $\Delta_k$ in vector $\pmb{\delta}_k$
                    \State $\hat{\imath}=\operatorname{argmin}_i([\pmb{\delta}_k]_i)$\hfill\Comment Find optimal column index $\hat{\imath}$
                    \State $\tilde{\mathbf{T}}\gets\begin{bmatrix}
                        \tilde{\mathbf{T}} & [\bar{\mathbf{T}}]_{\hat{\imath}}
                    \end{bmatrix}^\intercal$\hfill\Comment Append $\hat{\imath}$-th row of $\bar{\mathbf{T}}$ as $k$-th row of $\tilde{\mathbf{T}}$
                    
                    \If{$\forall\ j,\ 1\leq j\leq m-k,\ [\Delta]_{j\hat{\imath}}\leq\varepsilon$}\Comment All dist. to current subspace under $\varepsilon$
                        \State Break loop
                    \Else
                    \State Update $\bar{\mathbf{T}}_{k+1}$ by masking $\hat{\imath}$-th row of $\bar{\mathbf{T}}_k$
                    \EndIf
                \EndFor
                \State Compute weights $W$ from \cref{eq:greedy_coefs}
                \State\Return $\tilde{\mathbf{T}}$, $W$
            \end{algorithmic}
        \end{algorithm}

\end{document}